\newcommand\numberthis{\addtocounter{equation}{1}\tag{\theequation}}
\theoremstyle{plain}% Theorem-like structures provided by amsthm.sty
\newtheorem{theorem}{Theorem}[section]
\newtheorem{lemma}[theorem]{Lemma}
\newtheorem{proposition}[theorem]{Proposition}
\theoremstyle{definition}
\theoremstyle{remark}
\newtheorem{remark}{Remark}
\newlength{\origtabcolsep}
\let\originalleft\left
\let\originalright\right
\renewcommand{\left}{\mathopen{}\mathclose\bgroup\originalleft}
\renewcommand{\right}{\aftergroup\egroup\originalright}
\DeclareMathOperator{\E}{E}
\DeclareMathOperator{\Var}{Var}
\DeclareMathOperator{\Cov}{Cov}
\newcommand*{\p}{p}
\newcommand*{\diff}{\mathop{}\kern -2pt\mathrm{d}}
\newcommand*{\Model}{\ensuremath{M}}
\newcommand*{\elpd}[2]{{\mathrm{elpd}\bigr(#1 \mid #2\bigl)}}
\newcommand*{\elpdHat}[2]{{\widehat{\mathrm{elpd}}_\mathrm{\scriptscriptstyle LOO}\bigr(#1 \mid #2\bigl)}}
\newcommand*{\elpdHati}[3]{{\widehat{\mathrm{elpd}}_{\mathrm{\scriptscriptstyle LOO},\, #3}\bigr(#1 \mid #2\bigl)}}
\newcommand*{\seHatName}[3]{{\widehat{\mathrm{SE}}_{#3}\bigr(#1 \mid #2\bigl)}}
\newcommand*{\ShortElpdHat}{{L}} % shorthand notation for \elpdHat{}{}
\def\@maketitle{%
  \begin{center}%
  \let \footnote \thanks
     {\large {\bf \@title\vspace{\baselineskip}} \par}%
     {\normalsize
       \begin{tabular}[t]{c}%
         \@author
       \end{tabular}\par}%
     {\small {\vspace{.1in} \@date}}%
  \end{center}%
}
\title{Unbiased estimator for the variance of the leave-one-out cross-validation estimator for a Bayesian normal model with fixed variance}
\author{%
Tuomas Sivula\thanks{Department of Computer Science, Aalto University, Finland.}
\and
Måns Magnusson\thanks{Department of Statistics, Uppsala University, Sweden. Most of the work was done while at Aalto University.}
\and
Aki Vehtari\footnotemark[1]
}
\date{15th February 2022}
\providecommand{\keywords}[1]
{
  \small	
  \textbf{\textit{Keywords:}} #1
}
\begin{document}
\maketitle

\begin{abstract}
When evaluating and comparing models using leave-one-out cross-validation (LOO-CV), the uncertainty of the estimate is typically assessed using the variance of the sampling distribution.
Considering the uncertainty is important, as the variability of the estimate can be high in some cases.
An important result by \citet{bengio_Grandvalet_2004} states that no general unbiased variance estimator can be constructed, that would apply for any utility or loss measure and any model.
We show that it is possible to construct an unbiased estimator considering a specific predictive performance measure and model.
%, including unbiased variance estimators.
We demonstrate an unbiased sampling distribution variance estimator for the Bayesian normal model with fixed model variance using the expected log pointwise predictive density (elpd) utility score. This example demonstrates that it is possible to obtain improved, problem-specific, unbiased estimators for assessing the uncertainty in LOO-CV estimation.
%
%When evaluating and comparing models using leave-one-out cross-validation (LOO-CV), the uncertainty of the estimate is typically assessed using the variance of the sampling distribution. It is known, however, that no unbiased estimator for the variance can be constructed in a general case. While it has not been discussed before, it could be possible to construct such an estimator by considering specific models. In this paper, we show that an unbiased sampling distribution variance estimator is obtainable for the Bayesian normal model with fixed model variance using the expected log pointwise predictive density (elpd) utility score. Instead of the obtained pointwise LOO-CV estimates, we estimate the variance directly from the observations. Motivated by the presented unbiased variance estimator, it could be possible to obtain other improved problem-specific estimators, not only unbiased ones, for assessing the uncertainty of LOO-CV estimation.
\end{abstract}

\keywords{%
Bayesian computation, leave-one-out cross-validation, uncertainty, variance estimator, bias}

\section{Introduction}
\label{sec_intro}

Leave-one-out cross-validation (LOO-CV) is a popular method for estimating the predictive performance of Bayesian models based on new, unseen, data with respect to some utility or loss function.
As discussed by \citet[][Section 5.2.1]{arlot_celisse_2010_cv_survey}, the variability of the LOO-CV estimator can be high in some cases, and it is important to consider the uncertainty of the estimate.
In order to assess this uncertainty, one would typically estimate the standard error~\citep[see for example][]{Vehtari+Lampinen:2002,Vehtari+Ojanen:2012,Vehtari+Gelman+Gabry:2017_practical,Yao2018stacking,vehtari2019loopkg}.
An important result by \citet{bengio_Grandvalet_2004} states that no unbiased estimator exists for the variance of the sampling distribution.
While unbiasedness as such is not necessary for a useful estimator, experimental results show that in some typical cases, the current way of estimating the uncertainty can lead to underestimating the variance up to a factor of two or more~\citep{bengio_Grandvalet_2004,varoquaux2017166,varoquaux201868}. Furthermore, the bias in this estimator is theoretically unbounded~\citep{bengio_Grandvalet_2004}.
An alternative approach for assessing the uncertainty of a LOO-CV estimator is to use Bayesian bootstrap~\citep{Rubin:1981a,Vehtari+Lampinen:2002} but this method shares the same issues as estimating the sampling distribution variance. As demonstrated by \citet{sivula_2020_loo_uncertainty} in the context of model comparison, problematic situations include comparing models with similar predictions, model misspecification, and small data sets.
By improving the estimated uncertainty of the LOO-CV estimate, it could be possible to make a more robust assessment of the predictive performances of models and their differences.

While no unbiased estimator for the variance of a LOO-CV sampling distribution can be constructed in general, we show that it is possible to construct such an estimator by considering the specific problem setting at hand.
Previously, the variance estimation has been analysed in a model- and measure agnostic way by considering the problem given the obtained pointwise LOO-CV estimates.
However, given the data and the model, one could apply the model structure to the LOO-CV estimator with the selected measure to directly find the variance.
We show, as an example, that it is possible to find an unbiased variance estimator for the LOO-CV expected log pointwise predictive density (elpd) under a simple Bayesian normal model with fixed data variance. The results indicate the possibility of deriving other problem-specific estimators that could have a negligible bias or otherwise reduced error compared to the naive approach.

In the following, we first introduce the problem setting in Section~\ref{sec_problem_setting}.
In Section~\ref{sec_method}, as a demonstrational example, we present an improved estimator for the variance of the LOO-CV sampling distribution in the setting of a Bayesian normal model with fixed data variance.
Furthermore, in Section~\ref{sec_experiment}, we apply the derived improved estimator in a couple of simulated settings to assure it is unbiased and to illustrate the bias in the naive estimator.
Finally, Section~\ref{sec_conclusions} concludes the work by highlighting the findings and discussing their implications and possibilities for future research.

\subsection{Problem setting}\label{sec_problem_setting}

Consider data $y=(y_1, y_2, \dots, y_n)$ and let $\p_\text{true}(y)$ be the distribution representing the data generating mechanism.
For evaluating the predictive performance of a model $\Model$ conditional on an observed data set $y^\text{obs}$, we apply the \emph{expected log pointwise predictive density} (elpd) utility score~\citep{Vehtari+Ojanen:2012,Vehtari+Gelman+Gabry:2017_practical}:
\begin{align}
\label{eq_elpd}
    \elpd{\Model}{y^\text{obs}} &= \sum_{i=1}^n \int \p_\text{true}(y_i) \log \p_\Model\big(y_i \mid y^\text{obs}\big) \diff y_i
%    \\
%    & = n \cdot \overline{\elpdOld}_\Model \nonumber
\,,
\end{align}
where $\log \p_\Model(y_i \mid y^\text{obs})$ is the posterior predictive log density for the model $\Model$ fitted for the data set $y^\text{obs}$.
The LOO-CV estimate for $\elpd{\Model}{y^\text{obs}}$ is
\begin{align*}
\elpdHat{\Model}{y^\text{obs}} &= \sum_{i=1}^n \elpdHati{\Model}{y^\text{obs}}{i}
    %\\
    %&= \sum_{i=1}^n \log \p_\Model\big(y^\text{obs}_i \mid y^\text{obs}_{-i}\big)
    \numberthis\label{eq_elpd_loo}
\,,
\end{align*}
where
\begin{align*}
\elpdHati{\Model}{y^\text{obs}}{i} &= \log \p_\Model\big(y^\text{obs}_i \big| y^\text{obs}_{-i}\big) \\
    &= \log \int \p_\Model\big(y^\text{obs}_i \big| \theta \big)\p_\Model \big(\theta \big| y^\text{obs}_{-i}\big) \diff \theta
    \numberthis
\end{align*}
is the leave-one-out predictive log density for the $i$th observation $y^\text{obs}_i$ using model $\Model$, given all the other observations denoted with $y^\text{obs}_{-i}$.

For estimating the uncertainty about the estimand in a LOO-CV estimate, the commonly used naive approach is to estimate the variance of the estimator $\elpdHat{\Model}{y}$ by~\citep{vehtari2019loopkg}
%\begin{align*}
%&\widehat{\mathrm{Var}}_\text{naive}\left(\elpdHatC{\Model}{y}\right)
%\\
%&\qquad= \frac{n}{n-1}\sum_{i=1}^n \Bigg(\elpdHati{\Model}{y^\text{obs}}{i} - \frac{1}{n}\sum_{j=1}^n \elpdHati{\Model}{y^\text{obs}}{j} \Bigg)^2
%\numberthis\label{eq_naive_estimator}
%\,,
%\end{align*}
\begin{align*}
\seHatName{\Model}{y^\text{obs}}{\text{naive}}^2 = \frac{n}{n-1}\sum_{i=1}^n \Bigg(\elpdHati{\Model}{y^\text{obs}}{i} - \frac{1}{n}\sum_{j=1}^n \elpdHati{\Model}{y^\text{obs}}{j} \Bigg)^2
\,, \numberthis\label{eq_naive_estimator}
\end{align*}
which is based on the incorrect assumption that the terms $\elpdHati{\Model}{y}{i}$ are independent \cite[see][for a discussion on the different fold covariance structures]{bengio_Grandvalet_2004}.
% As discussed in Section~\ref{sec_intro}, \citet{bengio_Grandvalet_2004} show that there is no general unbiased estimator for this variance and thus also the estimator in Equation~\eqref{eq_naive_estimator} is biased.
Assuming the observations $y_i$ are i.i.d., the bias of this estimator is $-n^2 \gamma$, where $\gamma = \Cov\Bigl(\elpdHati{\Model}{y}{i}, \allowbreak \elpdHati{\Model}{y}{j} \Bigr)$ for any $i \neq j$~\citep{sivula_2020_loo_uncertainty}.

In this work, we seek an unbiased estimator for the variance $\operatorname{Var}\left( \elpdHat{\Model}{y} \right)$ in the context of one specific model $\Model$: the Bayesian normal model with fixed variance. We approach the problem by utilising the model's known predictive density function $p_\Model\big(\tilde{y} \big| y\big)$. Instead of the pointwise LOO-CV terms $\elpdHati{\Model}{y}{i}$, we consider the variance directly as a function of the data $y$ and derive the variance analytically. Based on this, we combine various moment estimators of $y$ to construct an unbiased estimator for the target variance. We do not need to utilise the inappropriate assumption of the independence of the pointwise LOO-CV terms $\elpdHati{\Model}{y}{i}$. Instead, as discussed in Section~\ref{sec_method}, only reasonable assumptions for $y$ suffices for obtaining the unbiased estimator.

\section{Unbiased variance estimator for a normal model}\label{sec_method}

In this section, we show that it is possible to construct an unbiased estimator for the variance of the sampling distribution of the LOO-CV estimator in a specific case.
We apply the LOO-CV with elpd utility score to estimate the predictive performance of a Bayesian normal model with fixed data variance.

Considering the true data generating mechanism, we assume that the first four moments exist for $\p_\text{true}(y)$ and that
%$y_i \overset{\text{indep}}{\sim} \p_\text{true}(y)$.
the observations $y_i$ are independent.
This can be summarized as:
\begin{align*}
    \E[y_i] &= \mu \,,
    \\
    \Var(y_i) &= \sigma^2 \,,
    \\
    \E[(y_i - \E[y_i])^r] &= \mu_r\,, \quad r=3,4, &&\text{($r$th central moment)},
    \\
    \E[f(y_i)g(y_j)] &= \E[f(y_i)]\E[g(y_j)] &&\text{(independence)}
    \label{eq_assumptions}\numberthis
\end{align*}
for $i,j=1,2,\dots,n$, $i \neq j$, and for all functions $f,g : \mathbb{R} \rightarrow \mathbb{R}$ for which the expectations $\E[f(y_i)]$ and $\E[g(y_j)]$ exists.
In addition, we assume $n \geq 4$.
We do not assume that the observations are identically distributed.

Considering the applied model, likelihood is
\begin{align}
    y_i \mid \theta, \Model \sim \operatorname{N}\left(\theta, \sigma_\mathrm{m}^2 \right)
    \,,\label{eq_model}
\end{align}
where $\Model$ indicates the applied model, $\theta$ is the sole estimate model parameter, and $\sigma_\mathrm{m}^2$ is a fixed data variance parameter.
The prior distribution for $\theta$ is
\begin{align}
    \theta \sim \operatorname{N}\left(0, \sigma_0^2 \right)
    \,, \label{eq_prior}
\end{align}
where $\sigma_0^2$ is a fixed prior variance parameter.
The fixed data variance parameter $\sigma_\mathrm{m}^2$ reflects how the model considers the magnitude of the variability of the data. A fixed $\sigma_\mathrm{m}^2$ is mainly chosen to simplify derivations. The fixed prior variance parameter $\sigma_0^2$ reflects the prior belief of the magnitude of the variability of the unknown mean parameter. The fixed model parameters $\sigma_\mathrm{m}^2$ and $\sigma_0^2$ can be chosen freely.

The model and the true data generating mechanism have different assumptions.
The notation $\Model$ in the conditional arguments in Equation~\eqref{eq_model} emphasises that the relation reflects the applied model, not the true data generating mechanism.
Considering the analysis of the LOO-CV estimation, the model may be misspecified so that it represents the true data generating mechanism poorly.
In particular, in Equation~\eqref{eq_model}, the evaluated model assumes the observations are identically distributed, while such an assumption is not made about the true data generating mechanism $\p_\text{true}(y)$.

In the following, we derive the variance of the sampling distribution $\elpdHat{\Model}{y}$ in Lemma~\ref{lemma_var_of_elpd}
and show that it is possible to estimate the required terms in Lemma~\ref{lemma_estims_for_some_moment_products}.
Finally, in Proposition~\ref{proposition_result}, we show that it is possible to construct an unbiased estimator for the variance of the sampling distribution $\elpdHat{\Model}{y}$ for the normal model defined in equations~\eqref{eq_model} and~\eqref{eq_prior}.
% In Section~\ref{sec_experiment}, we analyse the expectations of the proposed unbiased estimator and the naive estimator in a simulated experiment and confirm that they match with the analytic results.

\begin{lemma}
\label{lemma_var_of_elpd}
Let the data generating mechanism for $y=[y_1,y_2,\dots,y_n]$ be such that Equation~\eqref{eq_assumptions} holds and let the model $\Model$ be as defined in equations~\eqref{eq_model} and~\eqref{eq_prior}. We then have
\begin{align*}
    \Var\left(\elpdHat{\Model}{y}\right)
        &= 4 n (a + b + c)^2 \mu^2 \sigma^2 \\
&\quad
    +\left(
        - n a^2
        + \frac{2n}{n-1} b^2
        + \frac{n (2 n - 3) (n - 3)}{(n - 1)^3} c^2
        - \frac{2n}{n-1} a c
        + \frac{4n(n-2)}{(n-1)^2} b c
    \right) \sigma^4 \\
&\quad
    +\frac{4 n (a + b + c) (a (n - 1) + c)}{n - 1} \mu \mu_3 
\\&\quad
    +\left(
        n a^2
        + \frac{n}{(n-1)^2} c^2
        + \frac{2n}{n-1} a c
    \right) \mu_4
    \,,\numberthis
    \label{eq_true_var_in_lemma}
\end{align*}
where
\begin{align}
    a &= - \frac{1}{2} \frac{ \sigma_\mathrm{m}^2 + (n-1) \sigma_0^2}{\sigma_\mathrm{m}^2(\sigma_\mathrm{m}^2 + n \sigma_0^2)} \,, \\
    b &= \frac{(n-1) \sigma_0^2}{\sigma_\mathrm{m}^2 (\sigma_\mathrm{m}^2 + n \sigma_0^2)} \,, \\
    c &= - \frac{1}{2} \frac{(n-1)^2 \sigma_0^4}{\sigma_\mathrm{m}^2(\sigma_\mathrm{m}^2+(n-1)\sigma_0^2)(\sigma_\mathrm{m}^2+n \sigma_0^2)} \,,
    %d &= - \frac{1}{2} \log\left( 2\pi \frac{\sigma_\mathrm{m}^2(\sigma_\mathrm{m}^2 + n \sigma_0^2)}{\sigma_\mathrm{m}^2 + (n-1) \sigma_0^2} \right) \,.
\end{align}
\end{lemma}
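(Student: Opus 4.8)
The plan is to use the conjugacy of the model so that each leave-one-out predictive density is an explicit Gaussian, which turns $\elpdHat{\Model}{y}$ into a quadratic polynomial in the observations whose variance can then be computed using only the first four moments and independence.

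First I would evaluate the leave-one-out posterior and posterior predictive in closed form. Conditioning the normal prior~\eqref{eq_prior} on the $n-1$ observations $y_{-i}$ gives a normal posterior for $\theta$, so that $y_i \mid y_{-i} \sim \operatorname{N}(\mu_{-i}, v)$, where $\mu_{-i} = \frac{(n-1)\sigma_0^2}{\sigma_\mathrm{m}^2 + (n-1)\sigma_0^2}\,\bar y_{-i}$ is the shrunk mean of the left-out data, $\bar y_{-i} = \frac{1}{n-1}\sum_{j\neq i} y_j$, and $v = \sigma_\mathrm{m}^2 + \frac{\sigma_0^2\sigma_\mathrm{m}^2}{\sigma_\mathrm{m}^2 + (n-1)\sigma_0^2}$ does not depend on $i$. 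Expanding the Gaussian log density $\elpdHati{\Model}{y}{i} = -\tfrac{1}{2}\log(2\pi v) - \tfrac{1}{2v}(y_i - \mu_{-i})^2$ and collecting the coefficients of $y_i^2$, $y_i\bar y_{-i}$ and $\bar y_{-i}^2$ identifies them with the constants $a$, $b$ and $c$ of the lemma, so that $\elpdHati{\Model}{y}{i} = -\tfrac{1}{2}\log(2\pi v) + a\,y_i^2 + b\,y_i\bar y_{-i} + c\,\bar y_{-i}^2$.

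Next I would sum over $i$ and discard the additive constant, which does not affect the variance. Using $\sum_i y_i\bar y_{-i} = \frac{1}{n-1}\sum_{i\neq j} y_i y_j$ and $\sum_i \bar y_{-i}^2 = \frac{1}{(n-1)^2}\bigl[(n-1)\sum_i y_i^2 + (n-2)\sum_{i\neq j} y_i y_j\bigr]$, the random part of $\elpdHat{\Model}{y}$ collapses to $A\sum_i y_i^2 + B\bigl[(\sum_i y_i)^2 - \sum_i y_i^2\bigr]$ with $A = a + \frac{c}{n-1}$ and $B = \frac{(n-1)b + (n-2)c}{(n-1)^2}$, that is, a linear combination of the two power sums $\sum_i y_i^2$ and $(\sum_i y_i)^2$. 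To take its variance I would substitute $y_i = \mu + \epsilon_i$ with the $\epsilon_i$ independent and centred and $\E[\epsilon_i^2] = \sigma^2$, $\E[\epsilon_i^3] = \mu_3$, $\E[\epsilon_i^4] = \mu_4$; then the two power sums become polynomials in $\mu$, $\sum_i\epsilon_i$ and $\sum_i\epsilon_i^2$, and the variance reduces to $\Var(\sum_i y_i^2)$, $\Var((\sum_i y_i)^2)$ and $\Cov(\sum_i y_i^2, (\sum_i y_i)^2)$, each of which follows directly from independence and the moment list~\eqref{eq_assumptions}. Grouping the outcome by the monomials $\mu^2\sigma^2$, $\sigma^4$, $\mu\mu_3$ and $\mu_4$ and re-expressing $A$ and $B$ through $a$, $b$, $c$ and $n$ yields~\eqref{eq_true_var_in_lemma}; the identities $a(n-1)+c = (n-1)A$ and $a+b+c = A + (n-1)B$ are what collapse the coefficients of $\mu\mu_3$, $\mu_4$ and $\mu^2\sigma^2$ into the displayed form (the $\mu_4$-coefficient is just $nA^2$, and the $\sigma^4$-coefficient works out to $-nA^2 + 2n(n-1)B^2$).

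The main obstacle is this last step: the fourth-moment bookkeeping for a quartic symmetric polynomial in $n$ independent variables, where each mixed product $\E[\epsilon_i^2\epsilon_j^2]$, $\E[\epsilon_i^2\epsilon_j\epsilon_k]$ and $\E[\epsilon_i\epsilon_j\epsilon_k\epsilon_l]$ must be weighted by the correct count of index patterns, followed by the algebra that contracts the $n$-dependent rational coefficients built from $a$, $b$ and $c$ into~\eqref{eq_true_var_in_lemma}. Nothing beyond the four-moment and independence assumptions in~\eqref{eq_assumptions} is needed, so the only real risk is arithmetic; reducing everything to $\sum_i y_i^2$ and $(\sum_i y_i)^2$ first, as above, keeps the computation manageable.
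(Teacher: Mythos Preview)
Your proposal is correct and takes a genuinely different, more economical route than the paper. The paper keeps the pointwise terms $L_i = a y_i^2 + b y_i\bar y_{-i} + c\bar y_{-i}^2 + d$ separate, grinds out a long list of mixed expectations such as $\E[\bar y_{-i}^2\bar y_{-j}^2]$, $\E[y_i\bar y_{-i}\bar y_{-j}^2]$, $\E[y_i^2\bar y_{-j}^2]$, etc., assembles $\Var(L_i)$ and $\Cov(L_i,L_j)$ from them, and finally uses $\Var(\sum_i L_i)=n\Var(L_i)+n(n-1)\Cov(L_i,L_j)$. You instead sum first and observe that, by symmetry, the whole estimator collapses to the two power sums $S_2=\sum_i y_i^2$ and $S_1^2=(\sum_i y_i)^2$, with coefficients $A=a+\tfrac{c}{n-1}$ and $B=\tfrac{(n-1)b+(n-2)c}{(n-1)^2}$; the centring $y_i=\mu+\epsilon_i$ then reduces everything to the (co)variances of $\sum_i\epsilon_i$, $\sum_i\epsilon_i^2$ and $(\sum_i\epsilon_i)^2$, which are immediate. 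Your identities $A+(n-1)B=a+b+c$ and $(n-1)A=a(n-1)+c$ do recover the displayed coefficients: the $\mu_4$-coefficient is $nA^2$, the $\sigma^4$-coefficient is $-nA^2+2n(n-1)B^2$, the $\mu\mu_3$-coefficient is $4nA\bigl(A+(n-1)B\bigr)$, and the $\mu^2\sigma^2$-coefficient is $4n\bigl(A+(n-1)B\bigr)^2$. What your approach buys is a dramatic reduction in bookkeeping (three (co)variances instead of a dozen mixed fourth moments of $\bar y_{-i}$'s); what the paper's approach buys is that $\Var(L_i)$ and $\Cov(L_i,L_j)$ are obtained as intermediate by-products, which it actually uses later in Section~\ref{sec_experiment} to evaluate the expectation of the naive estimator.
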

\begin{proof}
\renewcommand{\qedsymbol}{} % no box in the end
See Appendix.
\end{proof}

\begin{lemma}
\label{lemma_estims_for_some_moment_products}
Let the data generating mechanism for $y=[y_1,y_2,\dots,y_n]$ be such that Equation~\eqref{eq_assumptions} holds.
Let
\begin{equation}
\widehat{\alpha}_k = \frac{1}{n}\sum_{i=1}^n y_i^k
\end{equation}
be the $k$th sample raw moment of the data and
\begin{align*}
    \widehat{\mu^4} = \binom{n}{4}^{-1} \sum_{i_1 \neq i_2 \neq i_3  \neq i_4} y_{i_1}y_{i_2}y_{i_3}y_{i_4}
   \,, \numberthis
\end{align*}
where the summation is over all possible combinations of $i_1, i_2, i_3, i_4 \in \{1,2,\dots,n\}$ without repetition,
be an unbiased estimator for the fourth power of the mean.
Now
\begingroup
\allowdisplaybreaks
\begin{align*}
    \widehat{\mu^2 \sigma^2} &= \frac{- n^3 \widehat{\alpha}_1^4 + 2 n^3 \widehat{\alpha}_2 \widehat{\alpha}_1^2 - 4 (n - 1) n \widehat{\alpha}_3 \widehat{\alpha}_1 - (2 n^2 - 3n) \widehat{\alpha}_2^2 + 2 (2 n - 3) \widehat{\alpha}_4}{2 (n - 3) (n - 2) (n - 1)} - \frac{1}{2} \widehat{\mu^4}
    \,. \numberthis
    \label{eq_estim_ms}
\\
    \widehat{\sigma^4} &= \frac{
        n^3 \widehat{\alpha}_1^4 - 2 n^3 \widehat{\alpha}_2 \widehat{\alpha}_1^2 + (n^3 - 3 n^2 + 3n) \widehat{\alpha}_2^2 + 4 n (n - 1) \widehat{\alpha}_3 \widehat{\alpha}_1 + n (1 - n) \widehat{\alpha}_4
    }{
        (n - 3) (n - 2) (n - 1)
    }
    \,, \numberthis \label{eq_estim_s4}
\\
    \widehat{\mu \mu_3} &= \frac{
        - 2 (n^2 + n -3) \widehat{\alpha}_4
        - 6 n^3 \widehat{\alpha}_1^2 \widehat{\alpha}_2
        + n (6 n -9) \widehat{\alpha}_2^2
        + 3 n^3 \widehat{\alpha}_1^4
        + 2 n^2 (n + 1) \widehat{\alpha}_1 \widehat{\alpha}_3
    }{
        2 (n-3) (n-2) (n-1)
    }
    + \frac{1}{2} \widehat{\mu^4}
    \,, \numberthis
    \label{eq_estim_mm}
\intertext{and}
    \widehat{\mu_4} &= \frac{
        -3n^4 \widehat{\alpha}_1^4
        +6n^4 \widehat{\alpha}_1^2\widehat{\alpha}_2
        +(9-6n)n^2 \widehat{\alpha}_2^2
        +(-12+8n-4n^2)n^2 \widehat{\alpha}_1\widehat{\alpha}_3
        +(3n-2n^2+n^3)n \widehat{\alpha}_4
    }{
        (n-3)(n-2)(n-1)n
    }
     \numberthis \label{eq_estim_m4}
\end{align*}
\endgroup
are unbiased estimators for the parameters $\mu^2 \sigma^2$, $\sigma^4$, $\mu \mu_3$, and $\mu_4$ respectively.
\end{lemma}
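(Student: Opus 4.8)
The plan is to turn the claim into the standard method-of-moments computation followed by a finite linear-algebra check. By assumption~\eqref{eq_assumptions} the $y_i$ share their first four raw moments $\alpha_k=\E[y_i^k]$, $k\le 4$, even though they need not be identically distributed; these are linked to the parameters by $\alpha_1=\mu$, $\alpha_2=\sigma^2+\mu^2$, $\alpha_3=\mu_3+3\mu\sigma^2+\mu^3$, and $\alpha_4=\mu_4+4\mu\mu_3+6\mu^2\sigma^2+\mu^4$. Inverting these relations writes the four target parameters as polynomials in the $\alpha_k$: $\mu^2\sigma^2=\alpha_1^2\alpha_2-\alpha_1^4$, $\sigma^4=\alpha_2^2-2\alpha_1^2\alpha_2+\alpha_1^4$, $\mu\mu_3=\alpha_1\alpha_3-3\alpha_1^2\alpha_2+2\alpha_1^4$, and $\mu_4=\alpha_4-4\alpha_1\alpha_3+6\alpha_1^2\alpha_2-3\alpha_1^4$. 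Hence it suffices to exhibit unbiased estimators of the five ``monomials'' $\alpha_1^4$, $\alpha_1^2\alpha_2$, $\alpha_1\alpha_3$, $\alpha_2^2$, and $\alpha_4$, after which linearity of expectation finishes everything.

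Next I would compute the expectations of the building blocks $\widehat\alpha_1^4$, $\widehat\alpha_1^2\widehat\alpha_2$, $\widehat\alpha_1\widehat\alpha_3$, $\widehat\alpha_2^2$, $\widehat\alpha_4$ (together with the given fact $\E[\widehat{\mu^4}]=\mu^4=\alpha_1^4$). Expanding each statistic as $n^{-d}$ times a sum over index tuples and grouping the terms by which indices coincide, independence~\eqref{eq_assumptions} factorises every term into a product of $\alpha_k$'s, and the coefficient of each resulting monomial is simply the number of index tuples with the corresponding coincidence pattern. For $\widehat\alpha_1^4$ the five patterns $abcd$, $aabc$, $aabb$, $aaab$, $aaaa$ have multiplicities $n(n-1)(n-2)(n-3)$, $6n(n-1)(n-2)$, $3n(n-1)$, $4n(n-1)$, $n$ (which sum to $n^4$), giving $\E[\widehat\alpha_1^4]$ as an explicit combination of the five monomials; the statistics $\widehat\alpha_1^2\widehat\alpha_2$, $\widehat\alpha_1\widehat\alpha_3$, $\widehat\alpha_2^2$, $\widehat\alpha_4$ are handled the same way with fewer patterns.

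Ordering the statistics as $\widehat\alpha_4$, $\widehat\alpha_2^2$, $\widehat\alpha_1\widehat\alpha_3$, $\widehat\alpha_1^2\widehat\alpha_2$, $\widehat\alpha_1^4$ and the monomials correspondingly, the resulting $5\times 5$ system is lower triangular with diagonal entries $1$, $(n-1)/n$, $(n-1)/n$, $(n-1)(n-2)/n^2$, $(n-1)(n-2)(n-3)/n^3$, all nonzero for $n\ge 4$, so back-substitution yields explicit unbiased estimators of each of the five monomials. Substituting those into the four polynomial identities of the first paragraph and collecting terms must reproduce \eqref{eq_estim_ms}--\eqref{eq_estim_m4}; the occurrences of $\widehat{\mu^4}$ in \eqref{eq_estim_ms} and \eqref{eq_estim_mm} simply reflect that there the alternative unbiased estimator $\widehat{\mu^4}$ of $\alpha_1^4$ has been used in place of the one coming from the inversion. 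Equivalently, one may bypass the inversion altogether and verify directly that the right-hand sides of \eqref{eq_estim_ms}--\eqref{eq_estim_m4} have the claimed expectations by inserting the building-block expectations from the previous step.

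The substance of the argument is only bookkeeping: getting the coincidence-pattern multiplicities right (especially the $abcd/aabc/aabb/aaab/aaaa$ split for the quartic statistic) and then carrying the back-substitution, or the direct check, through without algebra slips, since the coefficients are degree-$3$ polynomials in $n$ over the common denominator $(n-3)(n-2)(n-1)$. Nothing conceptual is forced by the $y_i$ not being identically distributed: only their first four moments enter, and those are common by \eqref{eq_assumptions}, while the factorisations use exactly the product property assumed there; the hypothesis $n\ge 4$ is precisely what keeps those denominators and $\binom{n}{4}$ from vanishing.
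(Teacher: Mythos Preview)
Your approach is correct and takes a genuinely different, more self-contained route than the paper. The paper proceeds piecemeal: it imports $\widehat{\mu_4}$ and an auxiliary statistic $t$ with $\E[t]=\mu_4+3\sigma^4$ from Espejo et al., combines them to obtain $\widehat{\sigma^4}$, then computes $\E[\widehat{\alpha}_2^{\,2}]$ separately to isolate $\E[y_i^2]^2=\mu^4+2\mu^2\sigma^2+\sigma^4$, and finally assembles $\widehat{\mu^2\sigma^2}$ and $\widehat{\mu\mu_3}$ by further linear combinations with $\widehat{\mu^4}$. You instead treat the problem uniformly: write each target as a polynomial in the raw moments, compute the expectation of every building block $\widehat{\alpha}_1^4,\widehat{\alpha}_1^2\widehat{\alpha}_2,\widehat{\alpha}_1\widehat{\alpha}_3,\widehat{\alpha}_2^{\,2},\widehat{\alpha}_4$ by coincidence-pattern counting, invert the resulting lower-triangular system, and read off the answers --- essentially the classical polykays construction. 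Your route needs no external citations and makes the role of $n\ge 4$ transparent as the nonvanishing of the diagonal; the paper's route is shorter to write down if one is willing to quote the literature for two of the four estimators.

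One small correction: the $\widehat{\mu^4}$ appearing in \eqref{eq_estim_ms} and \eqref{eq_estim_mm} is not an ``alternative'' unbiased estimator of $\alpha_1^4$ but the \emph{same} one your back-substitution produces, merely presented as a U-statistic rather than expanded via Newton's identities in the $\widehat{\alpha}_k$. Both lie in the five-dimensional span you set up, and since your triangular matrix is invertible the unbiased estimator of $\alpha_1^4$ in that span is unique; recognising this identity is what lets you match your output to the stated formulas exactly rather than merely up to a choice of representative.
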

\begin{proof}
\renewcommand{\qedsymbol}{} % no box in the end
See Appendix.
\end{proof}

\begin{proposition}
\label{proposition_result}
Let the data generating mechanism for $y=[y_1,y_2,\dots,y_n]$ be such that Equation~\eqref{eq_assumptions} holds and let the model $\Model$ be as defined in equations~\eqref{eq_model} and~\eqref{eq_prior}. Then there exist an unbiased estimator for $\Var\left(\elpdHat{\Model}{y}\right)$.
\end{proposition}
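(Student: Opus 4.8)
The plan is to obtain the estimator by combining the two preceding lemmas through linearity of expectation. First I would record the structural observation that makes everything work: Lemma~\ref{lemma_var_of_elpd} expresses the target quantity as
\[
\Var\left(\elpdHat{\Model}{y}\right) = C_1\, \mu^2 \sigma^2 + C_2\, \sigma^4 + C_3\, \mu \mu_3 + C_4\, \mu_4 \,,
\]
where the four coefficients $C_1,\dots,C_4$ are explicit functions of $n$ and of the fixed model constants $\sigma_\mathrm{m}^2$ and $\sigma_0^2$ only (via $a$, $b$, $c$), and in particular are \emph{known} — they do not depend on the unknown data-generating mechanism $\p_\text{true}$. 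The key point is that no other functionals of $\p_\text{true}$ appear on the right-hand side: the only unknowns are precisely the moment products $\mu^2 \sigma^2$, $\sigma^4$, $\mu \mu_3$, and $\mu_4$.

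Next I would invoke Lemma~\ref{lemma_estims_for_some_moment_products}, which supplies data-only statistics $\widehat{\mu^2 \sigma^2}$, $\widehat{\sigma^4}$, $\widehat{\mu \mu_3}$, $\widehat{\mu_4}$ that are unbiased for exactly these four quantities under the assumptions~\eqref{eq_assumptions}; these are well-defined because $n \geq 4$ is assumed, so the denominators $(n-3)(n-2)(n-1)$ and the coefficient $\binom{n}{4}$ are nonzero. Setting
\[
\widehat{V} = C_1\, \widehat{\mu^2 \sigma^2} + C_2\, \widehat{\sigma^4} + C_3\, \widehat{\mu \mu_3} + C_4\, \widehat{\mu_4} \,,
\]
with $C_1,\dots,C_4$ read off from Equation~\eqref{eq_true_var_in_lemma}, gives a statistic (a function of the observations $y$ alone), and by linearity of expectation
\[
\E\!\left[\widehat{V}\right] = C_1\, \mu^2 \sigma^2 + C_2\, \sigma^4 + C_3\, \mu \mu_3 + C_4\, \mu_4 = \Var\left(\elpdHat{\Model}{y}\right) \,,
\]
so $\widehat{V}$ is unbiased, which proves the existence claim. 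Since only existence is asserted, no simplification of the resulting (admittedly unwieldy) closed form is needed, though one could substitute the expressions for $a$, $b$, $c$ and for the moment estimators to display it explicitly if desired.

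The main obstacle is not in this assembly step at all but in the two lemmas it relies on: Lemma~\ref{lemma_var_of_elpd} must genuinely yield a linear form in precisely the four moment products that Lemma~\ref{lemma_estims_for_some_moment_products} can estimate, with every remaining coefficient being a data-independent constant — otherwise linearity of expectation would not apply cleanly and $\widehat V$ would not be a legitimate statistic. Verifying that the normal-model likelihood and conjugate prior make $\elpdHati{\Model}{y}{i}$ a quadratic in the observations, so that $\Var(\elpdHat{\Model}{y})$ is a polynomial of degree four in the $y_i$ whose expectation collapses onto exactly $\{\mu^2\sigma^2,\sigma^4,\mu\mu_3,\mu_4\}$, is where the real work lies, and that is carried out in the proofs of the two lemmas in the Appendix; the present proposition is then the short concluding step.
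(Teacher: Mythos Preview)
Your proposal is correct and follows essentially the same approach as the paper's own proof: observe that Lemma~\ref{lemma_var_of_elpd} writes the variance as a linear combination of $\mu^2\sigma^2$, $\sigma^4$, $\mu\mu_3$, $\mu_4$ with known coefficients, then plug in the unbiased estimators from Lemma~\ref{lemma_estims_for_some_moment_products} and invoke linearity of expectation. Your write-up is in fact somewhat more careful than the paper's, explicitly noting the $n\geq 4$ requirement for the estimators to be well-defined and stressing that the coefficients $C_1,\dots,C_4$ are data-independent so that $\widehat V$ is a legitimate statistic.
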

\begin{proof}
The required variance $\Var\left(\elpdHat{\Model}{y}\right)$, derived in Equation~\eqref{eq_true_var_in_lemma} in Lemma~\ref{lemma_var_of_elpd}, is a linear combination of the terms $\mu^2 \sigma^2$, $\sigma^4$, $\mu \mu_3$, and $\mu_4$, for which the multipliers depend on the known sample size $n$ and fixed parameters $\sigma_\mathrm{m}^2$ and $\sigma_0^2$.
Unbiased estimators for each of these terms are presented in Lemma~\ref{lemma_estims_for_some_moment_products}.
Thus, it is possible to construct an unbiased estimator for the required variance $\Var\left(\elpdHat{\Model}{y}\right)$ by substituting the terms in Equation~\eqref{eq_true_var_in_lemma} with the respective unbiased estimators in Lemma \ref{lemma_estims_for_some_moment_products}.
\end{proof}

\begin{remark}
The numerical stability of the estimator is  dependent on the moment estimators presented in Lemma~\ref{lemma_estims_for_some_moment_products}. In some settings, the estimator could be modified to consider the data skewness and/or excess kurtosis to be zero in order to reduce the variability caused by estimating them.
\end{remark}

\section{Simulated experiment}
\label{sec_experiment}

Here we compare the expectations of the naive LOO-CV variance estimator presented in Equation~\eqref{eq_naive_estimator} and the unbiased estimator discussed in Proposition~\ref{proposition_result} in a simulated experiment.
The motivation of the experiment is two-fold. First, to verify that the presented improved estimator is unbiased, and second, to highlight the bias of the naive estimator.
By choosing different parameters, three different  problem setting cases are studied:
\begin{enumerate}
\item well-matching data,
\item under-dispersed data, and
\item under-dispersed, skewed, and heavy-tailed data.
\end{enumerate}
%While covering some typical example problem settings, the experiment is not intended to explore possible magnitudes of the bias in the naive estimator.

We utilise Monte Carlo (MC) sampling for analysing the expectations of the variance estimators.
We simulate 20\,000 independent data sets of size $n=16$ under a known data generating mechanism and compute both the naive and the improved LOO-CV variance estimates for each of them.
The model is defined in equations~\eqref{eq_model} and~\eqref{eq_prior}.
The fixed model parameters are set to $\sigma^2_\mathrm{m}=1.2^2$, $\sigma^2_0=2^2$.
Based on the obtained variance estimates, we apply the Bayesian bootstrap (BB, \citealp{Rubin:1981a}) to infer the MC uncertainty of the expectations of the estimators.
%to infer the expectations of the estimators.
We use bootstrap sample size of 4\,000 and Dirichlet distribution parameter $\alpha=1$.
Compared to using the MC standard error, we use the BB approach to capture any skewness in the MC uncertainty.
The expectations are analysed under three different data generating mechanisms:
\begin{enumerate}
\item $y_i \sim \mathrm{N}(0, 1.2)$. The model matches the data well. The naive estimator underestimates the variance.
\item $y_i \sim \mathrm{N}(2, 0.1)$. The data is under-dispersed. The naive estimator overestimates the variance.
\item $y_i \sim \operatorname{skew-N}(\mathrm{location}=-2,\, \mathrm{scale}=0.16,\, \mathrm{shape}=10)$. The data is under-dispersed, skewed, and heavy-tailed. The naive estimator underestimates the variance.
\end{enumerate}

\begin{figure}[!htb]
\centerline{\includegraphics[width=0.9\textwidth]{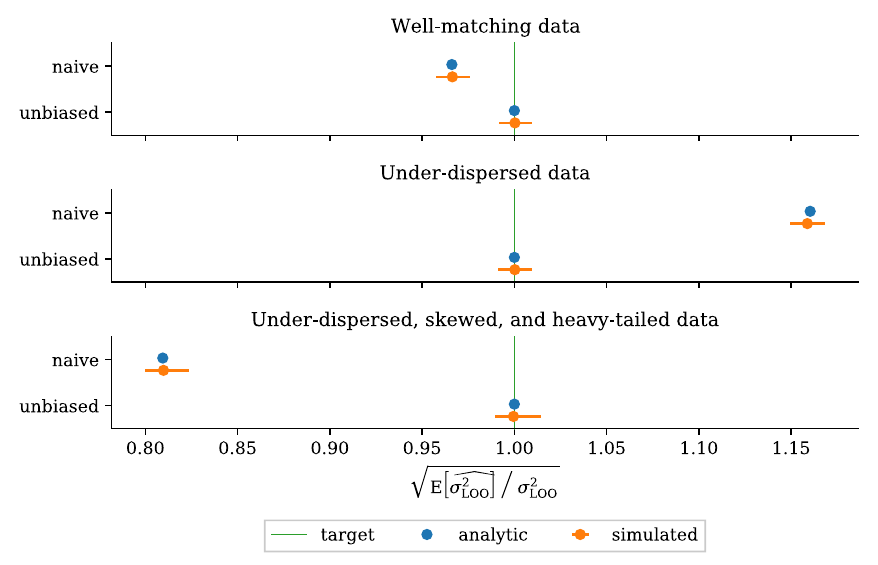}}
\caption{%
The expectation of the naive and unbiased LOO-CV variance estimators $\widehat{\sigma^2_\mathrm{\scriptscriptstyle LOO}}$ estimated using Bayesian bootstrap (BB, \citealp{Rubin:1981a}) in a simulated experiment under three different data generating mechanisms: well-matching, under-dispersed, and under-dispersed skewed heavy-tailed data respectively.
The x-axis is transformed to
the square root of the ratio to the LOO-CV estimator's true variance $\sigma^2_\mathrm{\scriptscriptstyle LOO}$.
%$\sqrt{\E\left[\widehat{\sigma^2_\mathrm{\scriptscriptstyle LOO}}\right]/\sigma^2_\mathrm{\scriptscriptstyle LOO}}$, where $\sigma^2_\mathrm{\scriptscriptstyle LOO}$ is the true variance of the LOO-CV estimator.
The analytic expectations (blue) match the simulated results (yellow) in all cases.
The BB uncertainty is illustrated using a dot and a line corresponding to the mean and 95 \% credible interval, respectively.
The naive estimator underestimates or overestimates the variance while the improved estimator discussed in Proposition~\ref{proposition_result} is unbiased.
}\label{fig_app_experiment_results}
\end{figure}

The results of the experiment are illustrated in Figure~\ref{fig_app_experiment_results}.
The analytic target variance and the expectation of the unbiased estimator are obtained using the equations presented in Lemma~\ref{lemma_var_of_elpd}.
The analytic expectation of the naive estimator is calculated by applying equations~\eqref{eq_app_var_looi} and \eqref{eq_app_cov_looi_looj} from the appendix, in the following known expectation in a more general setting~\citep{sivula_2020_loo_uncertainty}:
\begin{align}
    \E\left[\seHatName{\Model}{y}{\text{naive}}^2\right] = n \Var\Bigl(\elpdHati{\Model}{y}{i} \Bigr) - n \Cov\Bigl(\elpdHati{\Model}{y}{i}, \elpdHati{\Model}{y}{j} \Bigr)
    \,.
\end{align}
The simulated results match with the analytic ones in all the experiment settings.
As expected, the estimated expectation of the unbiased estimator is around the target variance.
Depending on the situation, the naive estimator underestimates or overestimates the variance.
The source code of the experiment is available online~\citep{sivula_2020_loocv_uncertainty_github_repo}.

\section{Discussion}\label{sec_conclusions}

The current common way of estimating the uncertainty in the LOO-CV model assessment and comparison utilizes a naive biased estimator for the variance of the sampling distribution.
The naive approach may result in a significantly underestimated variability~\citep{bengio_Grandvalet_2004,varoquaux2017166,varoquaux201868} and bad calibration of the uncertainty~\citep{sivula_2020_loo_uncertainty}.
An important result by \citet{bengio_Grandvalet_2004} states that no unbiased variance estimator can be constructed in general, that would apply for any utility or loss measure and any model.
We show that it is possible to construct an unbiased estimator considering a specific predictive performance measure and model.

While the unbiasedness itself is not necessary for a feasible estimator of the variance of the sampling distribution in this context, the presented result serves as an example of the existence of such estimators and as an example of a possibility to improve over the naive approach.
We expect that this approach of finding problem-specific estimators would extend to other, more complex models and benefit the data analysis field with a more accurate assessment of the uncertainty in the widely used LOO-CV model comparison.
Although deriving the variance of the sampling distribution could be unfeasible in some problem settings, it could be possible to apply some approximative method to obtain an estimator, not necessarily unbiased one, that would result in a better calibration of the uncertainty than with the naive variance estimator.
Further research is needed to study the possibility of extending the problem-specific approach to more complex settings.

% ====================================================================================
% ====================================================================================

\section*{Acknowledgements}
We acknowledge the computational resources provided by the Aalto Science-IT project. This work was supported by the Academy of Finland grants (298742 and 313122) and Academy of Finland Flagship programme: Finnish Center for Artificial Intelligence FCAI.

%\section*{Disclosure statement}

%No potential competing interest was reported by the authors.

%\section*{Data availability statement}

%The data that support the findings of this study are openly available in GitHub at \url{https://github.com/avehtari/loocv_uncertainty/tree/v1.0.2}, reference number 71909931.

\bibliography{references}

\begin{thebibliography}{15}
\providecommand{\natexlab}[1]{#1}
\providecommand{\url}[1]{\texttt{#1}}
\expandafter\ifx\csname urlstyle\endcsname\relax
  \providecommand{\doi}[1]{doi: #1}\else
  \providecommand{\doi}{doi: \begingroup \urlstyle{rm}\Url}\fi

\bibitem[Arlot and Celisse(2010)]{arlot_celisse_2010_cv_survey}
Sylvain Arlot and Alain Celisse.
\newblock A survey of cross-validation procedures for model selection.
\newblock \emph{Statistics surveys}, 4:\penalty0 40--79, 2010.

\bibitem[Bengio and Grandvalet(2004)]{bengio_Grandvalet_2004}
Yoshua Bengio and Yves Grandvalet.
\newblock No unbiased estimator of the variance of {K}-fold cross-validation.
\newblock \emph{Journal of machine learning research}, 5\penalty0
  (Sep):\penalty0 1089--1105, 2004.

\bibitem[Espejo et~al.(2013)Espejo, Pineda, and Nadarajah]{espejo2013optimal}
Mariano~Ruiz Espejo, Miguel~Delgado Pineda, and Saralees Nadarajah.
\newblock Optimal unbiased estimation of some population central moments.
\newblock \emph{Metron}, 71\penalty0 (1):\penalty0 39--62, 2013.

\bibitem[Gelman et~al.(2013)Gelman, Carlin, Stern, Dunson, Vehtari, and
  Rubin]{bda_book}
Andrew Gelman, {John B.} Carlin, {Hal S.} Stern, {David B.} Dunson, Aki
  Vehtari, and {Donald B.} Rubin.
\newblock \emph{Bayesian Data Analysis}.
\newblock Taylor and Francis, 3rd edition, 2013.
\newblock ISBN 9781439840955.

\bibitem[Glasser(1961)]{unbiased_mean_powers}
Gerald~J. Glasser.
\newblock An unbiased estimator for powers of the arithmetic mean.
\newblock \emph{Journal of the Royal Statistical Society. Series B
  (Methodological)}, 23\penalty0 (1):\penalty0 154--159, 1961.

\bibitem[Rubin(1981)]{Rubin:1981a}
Donald~B. Rubin.
\newblock The {Bayesian} bootstrap.
\newblock \emph{Annals of Statistics}, 9\penalty0 (1):\penalty0 130--134, 1981.

\bibitem[Sivula et~al.(2020{\natexlab{a}})Sivula, Magnusson, and
  Vehtari]{sivula_2020_loo_uncertainty}
Tuomas Sivula, M{\aa}ns Magnusson, and Aki Vehtari.
\newblock Uncertainty in {Bayesian} leave-one-out cross-validation based model
  comparison.
\newblock \emph{arXiv preprint arXiv:2008.10296}, 2020{\natexlab{a}}.

\bibitem[Sivula et~al.(2020{\natexlab{b}})Sivula, Magnusson, and
  Vehtari]{sivula_2020_loocv_uncertainty_github_repo}
Tuomas Sivula, M{\aa}ns Magnusson, and Aki Vehtari.
\newblock Experiment for analysing the uncertainty in loo-cv.
\newblock \url{https://github.com/avehtari/loocv_uncertainty/tree/v1.0.2},
  2020{\natexlab{b}}.

\bibitem[Varoquaux(2018)]{varoquaux201868}
Gaël Varoquaux.
\newblock Cross-validation failure: Small sample sizes lead to large error
  bars.
\newblock \emph{NeuroImage}, 180:\penalty0 68 -- 77, 2018.

\bibitem[Varoquaux et~al.(2017)Varoquaux, Raamana, Engemann, Hoyos-Idrobo,
  Schwartz, and Thirion]{varoquaux2017166}
Gaël Varoquaux, Pradeep~Reddy Raamana, Denis~A. Engemann, Andrés
  Hoyos-Idrobo, Yannick Schwartz, and Bertrand Thirion.
\newblock Assessing and tuning brain decoders: Cross-validation, caveats, and
  guidelines.
\newblock \emph{NeuroImage}, 145:\penalty0 166 -- 179, 2017.

\bibitem[Vehtari and Lampinen(2002)]{Vehtari+Lampinen:2002}
Aki Vehtari and Jouko Lampinen.
\newblock Bayesian model assessment and comparison using cross-validation
  predictive densities.
\newblock \emph{Neural Computation}, 14\penalty0 (10):\penalty0 2439--2468,
  2002.

\bibitem[Vehtari and Ojanen(2012)]{Vehtari+Ojanen:2012}
Aki Vehtari and Janne Ojanen.
\newblock A survey of {Bayesian} predictive methods for model assessment,
  selection and comparison.
\newblock \emph{Statistics Surveys}, 6:\penalty0 142--228, 2012.

\bibitem[Vehtari et~al.(2017)Vehtari, Gelman, and
  Gabry]{Vehtari+Gelman+Gabry:2017_practical}
Aki Vehtari, Andrew Gelman, and Jonah Gabry.
\newblock Practical {Bayesian} model evaluation using leave-one-out
  cross-validation and {WAIC}.
\newblock \emph{Statistics and Computing}, 27\penalty0 (5):\penalty0
  1413--1432, 2017.

\bibitem[Vehtari et~al.(2019)Vehtari, Gabry, Magnusson, Yao, and
  Gelman]{vehtari2019loopkg}
Aki Vehtari, Jonah Gabry, M{\aa}ns Magnusson, Yuling Yao, and Andrew Gelman.
\newblock loo: Efficient leave-one-out cross-validation and {WAIC} for
  {Bayesian} models, 2019.
\newblock URL \url{https://mc-stan.org/loo}.
\newblock R package version 2.2.0.

\bibitem[Yao et~al.(2018)Yao, Vehtari, Simpson, and Gelman]{Yao2018stacking}
Yuling Yao, Aki Vehtari, Daniel Simpson, and Andrew Gelman.
\newblock Using stacking to average {Bayesian} predictive distributions (with
  discussion).
\newblock \emph{Bayesian Analysis}, 13\penalty0 (3):\penalty0 917--1003, 2018.

\end{thebibliography}

% \bibliographystyle{tfcad}
% \bibliography{references}

% \begin{appendices}

\section{Appendices}
\appendix

\section{Proofs for the lemmas}

In this appendix, we present proofs for lemmas~\ref{lemma_var_of_elpd} and~\ref{lemma_estims_for_some_moment_products}
in sections~\ref{sec_app_var_elpd} and~\ref{sec_app_samp_estims_for_terms} respectively.

\subsection{Proof of Lemma~\ref{lemma_var_of_elpd}}\label{sec_app_var_elpd}
In this section, we give proof for Lemma~\ref{lemma_var_of_elpd} by deriving the LOO-CV variance estimator as a function of the data given the Normal model.
Let us first restate the lemma.
%\begin{lemma}
\begin{quote}
%\begingroup
%\allowdisplaybreaks
Let the data generating mechanism for $y=[y_1,y_2,\dots,y_n]$ be such that Equation~\eqref{eq_assumptions} holds and let the model $\Model$ be as defined in equations~\eqref{eq_model} and~\eqref{eq_prior}. We have
\begin{align*}
    &\Var\left(\elpdHat{\Model}{y}\right)
        \\
&\qquad= 4 n (a + b + c)^2 \mu^2 \sigma^2 \\
&\qquad\qquad+
    \left(
        - n a^2
        + \frac{2n}{n-1} b^2
        + \frac{n (2 n - 3) (n - 3)}{(n - 1)^3} c^2
        - \frac{2n}{n-1} a c
        + \frac{4n(n-2)}{(n-1)^2} b c
    \right) \sigma^4 \\
&\qquad\qquad+
    \frac{4 n (a + b + c) (a (n - 1) + c)}{n - 1} \mu \mu_3 \\
&\qquad\qquad+
    \left(
        n a^2
        + \frac{n}{(n-1)^2} c^2
        + \frac{2n}{n-1} a c
    \right) \mu_4
    \,,\numberthis
    %\label{eq_true_var_in_lemma}
\end{align*}
where
\begin{align}
    a &= - \frac{1}{2} \frac{ \sigma_\mathrm{m}^2 + (n-1) \sigma_0^2}{\sigma_\mathrm{m}^2(\sigma_\mathrm{m}^2 + n \sigma_0^2)} \,, \\
    b &= \frac{(n-1) \sigma_0^2}{\sigma_\mathrm{m}^2 (\sigma_\mathrm{m}^2 + n \sigma_0^2)} \,, \\
    c &= - \frac{1}{2} \frac{(n-1)^2 \sigma_0^4}{\sigma_\mathrm{m}^2(\sigma_\mathrm{m}^2+(n-1)\sigma_0^2)(\sigma_\mathrm{m}^2+n \sigma_0^2)} \,,
    %d &= - \frac{1}{2} \log\left( 2\pi \frac{\sigma_\mathrm{m}^2(\sigma_\mathrm{m}^2 + n \sigma_0^2)}{\sigma_\mathrm{m}^2 + (n-1) \sigma_0^2} \right) \,.
\end{align}
%\endgroup
\end{quote}
%\end{lemma}
\begin{proof}
Let $\sum_{k\neq S}$ denote $\sum_{k \in \{1,2,\dots,n\}\setminus S}$
and let $i,j,h,k \in \{1,2,\dots,n\}, i \neq j, i \neq h, i \neq k, j \neq h, j \neq k, h \neq k$, that is $i,j,h,k$ are all distinct.
We make the following assumptions:
\begin{align*}
    \E[y_i] &= \mu \,,
    \\
    \Var(y_i) &= \sigma^2 \,,
    \\
    \E[(y_i - \E[y_i])^r] &= \mu_r\,, \quad r=3,4, &&\text{($r$th central moment)},
    \\
    \E[f(y_i)g(y_j)] &= \E[f(y_i)]\E[g(y_j)] &&\text{(independence)}
    \numberthis
\end{align*}
for all functions $f,g : \mathbb{R} \rightarrow \mathbb{R}$ for which the expectations $\E[f(y_i)]$ and $\E[g(y_j)]$ exists.
In addition, we assume $n \geq 4$.
Let $\overline{y} = \frac{1}{n}\sum_{p=1}^n y_p$.
Given the assumptions, we have
\begin{align*}
\E[y_i^2] &= \E[y_i]^2 + \Var(y_i) = \mu^2 + \sigma^2 \,,\\
\E[y_i^3] &= \mu_3 + 3 \sigma^2 \mu + \mu^3 \,,\\
\E[y_i^4] &= \mu_4 + 4 \mu_3 \mu + 6 \sigma^2 \mu^2 + \mu^4 \,,\\
\E[y_i^2]^2 &= \left( \mu^2 + \sigma^2 \right)^2 = \mu^4 + 2 \mu^2 \sigma^2 + \sigma^4
    \numberthis\,.
\end{align*}
Let
%----------------------------
% local allow equation break to multiple pages
\begingroup
\allowdisplaybreaks
%----------------------------
\begin{gather}
A = \sum_{p \neq \{i,j\}} y_p \,, \qquad
B = \sum_{p \neq \{i,j\}} y_p^2 \,, \qquad
C = \sum_{p \neq \{i,j\}} \sum_{q \neq \{i,j,p\}} y_p y_q \,.
\end{gather}
With these, we have the following expectations:
\begin{align*}
\E[A] &= \E\left[ \sum_{p \neq \{i,j\}} y_p \right] = (n-2)\E[y_i]
    \,,\numberthis\\
\E[B] &= \E\left[ \sum_{p \neq \{i,j\}} y_p^2 \right] = (n-2)\E[y_i^2]
    \,,\numberthis\\
\E[C] &= \E\left[ \sum_{p \neq \{i,j\}} \sum_{q \neq \{i,j,p\}} y_p y_q \right]
    = (n-2)(n-3)\E[y_i]^2
    \,,\numberthis\\
\E[A^2] &= \E\left[ \left( \sum_{p \neq \{i,j\}} y_p \right)^2 \right]
    = \sum_{p \neq \{i,j\}} \sum_{q \neq \{i,j\}} \E[y_p y_q]
    = \sum_{p \neq \{i,j\}} \E[y_p^2] + \sum_{p \neq \{i,j\}} \sum_{q \neq \{i,j,p\}} \E[y_p y_q] \\
  &= (n-2)\E[y_i^2] + (n-2)(n-3)\E[y_i]^2
  \,,\numberthis\\
\E[B^2] &= \E\left[ \left( \sum_{p \neq \{i,j\}} y_p^2 \right)^2 \right]
    = \sum_{p \neq \{i,j\}} \sum_{q \neq \{i,j\}} \E[y_p^2 y_q^2]
    = \sum_{p \neq \{i,j\}} \E[y_p^4] + \sum_{p \neq \{i,j\}} \sum_{q \neq \{i,j,p\}} \E[y_p^2 y_q^2] \\
  &= (n-2)\E[y_i^4] + (n-2)(n-3)\E[y_i^2]^2
  \,,\numberthis\\
\E[C^2] &= \E\left[ \left( \sum_{p \neq \{i,j\}} \sum_{q \neq \{i,j,p\}} y_p y_q \right)^2 \right]
    = \E\left[ \sum_{p \neq \{i,j\}} \sum_{q \neq \{i,j,p\}} y_p y_q \sum_{r \neq \{i,j\}} \sum_{s \neq \{i,j,r\}} y_r y_s \right] \\
  &= \E\Bigg[ \sum_{p \neq \{i,j\}} \sum_{q \neq \{i,j,p\}} y_p y_q \Bigg(
        2 \bigg( y_p y_q + y_p \sum_{r \neq \{i,j,p,q\}} y_r + y_q \sum_{r \neq \{i,j,p,q\}} y_r \bigg) \\
        &\qquad\qquad + \sum_{r \neq \{i,j,p,q\}} \sum_{s \neq \{i,j,p,q,r\}} y_r y_s
  \Bigg) \Bigg] \\
  &= (n-2)(n-3)\bigg(2\Big(\E[y_i^2]^2 + 2 (n-4) \E[y_i^2]\E[y_i]^2 \Big) + (n-4)(n-5)\E[y_i]^4 \bigg)
  \,,\numberthis\\
\E[AB] &= \E\left[ \left( \sum_{p \neq \{i,j\}} y_p \right) \left( \sum_{p \neq \{i,j\}} y_p^2 \right) \right]
    = \sum_{p \neq \{i,j\}} \sum_{q \neq \{i,j\}} \E[y_p y_q^2] \\
    &= \sum_{p \neq \{i,j\}} \E[y_p^3] + \sum_{p \neq \{i,j\}} \sum_{q \neq \{i,j,p\}} \E[y_p y_q^2] \\
    &= (n-2)\E[y_i^3] + (n-2)(n-3)\E[y_i]\E[y_i^2]
    \,,\numberthis\\
\E[AC] &= \E\left[ \left( \sum_{p \neq \{i,j\}} y_p \right) \left( \sum_{p \neq \{i,j\}} \sum_{q \neq \{i,j,p\}} y_p y_q \right) \right]
    = \E\left[ \sum_{p \neq \{i,j\}} \sum_{q \neq \{i,j,p\}} y_p y_q \sum_{r \neq \{i,j\}} y_r \right] \\
  &= \E\left[ \sum_{p \neq \{i,j\}} \sum_{q \neq \{i,j,p\}} y_p y_q
    \left( y_p + y_q + \sum_{r \neq \{i,j,p,q\}} y_r \right) \right] \\
  &= (n-2)(n-3)\left( 2 \E[y_i^2]\E[y_i] + (n-4)\E[y_i]^3 \right)
  \,,\numberthis\\
\E[BC] &= \E\left[ \left( \sum_{p \neq \{i,j\}} y_p^2 \right) \left( \sum_{p \neq \{i,j\}} \sum_{q \neq \{i,j,p\}} y_p y_q \right) \right]
    = \E\left[ \sum_{p \neq \{i,j\}} \sum_{q \neq \{i,j,p\}} y_p y_q \sum_{r \neq \{i,j\}} y_r^2 \right] \\
  &= \E\left[ \sum_{p \neq \{i,j\}} \sum_{q \neq \{i,j,p\}} y_p y_q
    \left( y_p^2 + y_q^2 + \sum_{r \neq \{i,j,p,q\}} y_r^2 \right) \right] \\
  &= (n-2)(n-3)\left( 2 \E[y_i^3]\E[y_i] + (n-4)\E[y_i]^2\E[y_i^2] \right)
  \,.\numberthis
\end{align*}
%----------------------------
\endgroup % for allowdisplaybreaks locally
%----------------------------
Now we can derive the following expectations, which are utilized later on:
%----------------------------
% local left align equations
\makeatletter
\setbool{@fleqn}{true}
\makeatother
%----------------------------
% local allow equation break to multiple pages
\begingroup
\allowdisplaybreaks
%----------------------------
\begin{align*}
\overline{y}_{-i}
&= \frac{1}{n-1} \sum_{p \neq \{i\}} y_p = \frac{1}{n-1} \left( y_j + \sum_{p \neq \{i,j\}} y_p \right)
=  \frac{1}{n-1} \left( y_j + A \right)
\numberthis \,,
\\
\E\left[\overline{y}_{-i}\right]
&= \frac{1}{n-1} (n-1) \E[y_i] = \mu
\numberthis \,,
\end{align*}

% -----------------
\begin{align*}
\overline{y}_{-i}\;\overline{y}_{-j}
&= \left( \frac{1}{n-1} \sum_{p \neq \{i\}} y_p \right)
    \left( \frac{1}{n-1} \sum_{p \neq \{j\}} y_p \right) \\
& = \frac{1}{(n-1)^2} \left(
        y_i y_j + y_i \sum_{p \neq \{i,j\}} y_p + y_j \sum_{p \neq \{i,j\}} y_p
        + \sum_{p \neq \{i,j\}} y_p^2
        + \sum_{p \neq \{i,j\}} \sum_{q \neq \{i,j,p\}} y_p y_q
    \right) \\
& = \frac{1}{(n-1)^2} \Big(y_i y_j + y_i A + y_j A + B + C \Big)
\numberthis
\\
\E\left[\overline{y}_{-i} \; \overline{y}_{-j}\right]
&= \frac{1}{(n-1)^2} \left(\E[y_i]^2 + 2 \E[y_i]\E[A] + \E[B] + \E[C] \right) \\
&= \frac{1}{(n-1)^2} \left(\E[y_i]^2 + 2 \E[y_i](n-2)\E[y_i] + (n-2)\E[y_i^2] + (n-2)(n-3)\E[y_i]^2 \right) \\
&= \frac{1}{(n-1)^2} \left( ((n-2)(n-3) + 2(n-2) + 1)\E[y_i]^2 + (n-2)\E[y_i^2] \right) \\
&= \mu^2 + \frac{n-2}{(n-1)^2} \sigma^2
\numberthis \,,
\end{align*}

\begin{align*}
\E\left[y_i y_j \overline{y}_{-i} \; \overline{y}_{-j}\right]
    &= \frac{1}{(n-1)^2} \E\left[y_i^2 y_j^2 + y_i^2 y_j A + y_j^2 y_i A + y_i y_j B + y_i y_j C \right] \\
    &= \frac{1}{(n-1)^2} \Big(
        \E[y_i^2]^2 + 2 \E[y_i^2] \E[y_i] \E[A] + \E[y_i]^2 \E[B] + \E[y_i]^2 \E[C]
    \Big) \\
    &= \frac{1}{(n-1)^2} \Big(
        \E[y_i^2]^2
        + 2 \E[y_i^2] \E[y_i] (n-2)\E[y_i]
        + \E[y_i]^2 (n-2)\E[y_i^2] \\
        &\qquad\qquad+ \E[y_i]^2 (n-2)(n-3)\E[y_i]^2
    \Big) \\
    &= \frac{1}{(n-1)^2} \Big(
        \E[y_i^2]^2
        + 3(n-2) \E[y_i^2] \E[y_i]^2
        + (n-2)(n-3)\E[y_i]^4
    \Big) \\
    &= \frac{1}{(n-1)^2} \Big(
        \mu^4 + 2 \mu^2 \sigma^2 + \sigma^4
        + 3(n-2) (\mu^2 + \sigma^2) \mu^2
        + (n-2)(n-3)\mu^4
    \Big) \\
    &=
        \mu^4
        + \frac{3n-4}{(n-1)^2} \mu^2 \sigma^2
        + \frac{1}{(n-1)^2} \sigma^4
\numberthis \,,
\end{align*}

% ----------- E[¯y-i^2]
\begin{align*}
\overline{y}_{-i}^2
&= \left( \frac{1}{n-1} \sum_{p \neq \{i\}} y_p \right)^2
    = \frac{1}{(n-1)^2} \sum_{p \neq \{i\}} \sum_{q \neq \{i\}}y_p y_q \\
&= \frac{1}{(n-1)^2} \left(
    y_j^2
    + 2 y_j \sum_{p \neq \{i,j\}} y_p
    + \sum_{p \neq \{i,j\}} y_p^2
    + \sum_{p \neq \{i,j\}} \sum_{q \neq \{i,j,p\}} y_p y_q
\right) \\
&= \frac{1}{(n-1)^2} \left(y_j^2 + 2 y_j A + B + C\right)
\numberthis \,, \\
% --
\E\left[\overline{y}_{-i}^2\right]
&= \frac{1}{(n-1)^2} \left(\E[y_i^2] + 2 \E[y_i] \E[A] + \E[B] + \E[C]\right)
\\
&= \frac{1}{(n-1)^2} \left(
    \E[y_i^2] + 2 \E[y_i](n-2)\E[y_i] + (n-2)\E[y_i^2] + (n-2)(n-3)\E[y_i]^2 \right)
\\
&= \frac{1}{(n-1)^2} \left(
    (n-1) \E[y_i^2] + (n-1)(n-2) \E[y_i]^2 \right)
\\
&= \frac{1}{(n-1)^2} \left(
    (n-1) (\mu^2 + \sigma^2) + (n-1)(n-2) \mu^2 \right)
\\
&= \frac{1}{(n-1)^2} \left(
    (n-1)^2 \mu^2 + (n-1) \sigma^2 \right)
\\
&= \mu^2 + \frac{1}{n-1} \sigma^2
\numberthis \,,
\end{align*}

% ----------- E[¯y-i^3]
\begin{align*}
\overline{y}_{-i}^3
&= \overline{y}_{-i} \; \overline{y}_{-i}^2
    = \frac{1}{n-1} \left( y_j + A \right) \frac{1}{(n-1)^2} \left(y_j^2 + 2 y_j A + B + C\right) \\
&= \frac{1}{(n-1)^3} \left(y_j^3 + 3 y_j^2 A + y_j B + y_j C + 2 y_j A^2 + A B + A C\right)
% --
\numberthis \,,\\
\E\left[\overline{y}_{-i}^3\right]
&= \frac{1}{(n-1)^3} \Big(\E[y_i^3] + 3 \E[y_i^2] \E[A] + \E[y_i] \E[B] + \E[y_i] \E[C] \\
    &\qquad\qquad + 2 \E[y_i] \E[A^2] + \E[A B] + \E[A C]\Big)
\\
&= \frac{1}{(n-1)^3} \Bigg(
    \E[y_i^3]
    + 3 \E[y_i^2] (n-2)\E[y_i]
    + \E[y_i] (n-2)\E[y_i^2] \\ &\qquad\qquad
    + \E[y_i] (n-2)(n-3)\E[y_i]^2 \\ &\qquad\qquad
    + 2 \E[y_i] \left((n-2)\E[y_i^2] + (n-2)(n-3)\E[y_i]^2\right) \\ &\qquad\qquad
    + \left((n-2)\E[y_i^3] + (n-2)(n-3)\E[y_i]\E[y_i^2]\right) \\ &\qquad\qquad
    + (n-2)(n-3)\left( 2 \E[y_i^2]\E[y_i] + (n-4)\E[y_i]^3 \right)
    \Bigg)
\\
&= \frac{1}{(n-1)^2} \bigg(
    \E[y_i^3]
    + 3 (n-2) \E[y_i] \E[y_i^2]
    + (n-2)(n-3)\E[y_i]^3
    \bigg)
\\
&= \frac{1}{(n-1)^2} \bigg(
    \mu_3 + 3 \sigma^2 \mu + \mu^3
    + 3 (n-2) \mu \left(\mu^2 + \sigma^2\right)
    + (n-2)(n-3)\mu^3
    \bigg)
\\
&= \mu^3 + \frac{1}{(n-1)^2} \mu_3 + \frac{3}{n-1} \mu \sigma^2
\numberthis \,,
\end{align*}

% ----------- E[¯y-i^4]
\begin{align*}
\overline{y}_{-i}^4
&= \left( \overline{y}_{-i}^2 \right)^2
    = \left( \frac{1}{(n-1)^2} \left(y_j^2 + 2 y_j A + B + C\right) \right)^2 \\
&= \frac{1}{(n-1)^4} \bigg(
    y_j^4 + 4 y_j^2 A^2 + B^2 + C^2 \\
    &\qquad\qquad
    + 4 y_j^3 A + 2 y_j^2 B + 2 y_j^2 C
    + 4 y_j A B + 4 y_j A C
    + 2 B C
\bigg)
\numberthis \,,\\
\end{align*}
\begin{gather*}
\E\left[\overline{y}_{-i}^4\right]
= \frac{1}{(n-1)^4} \bigg(
    \E[y_i^4] + 4 \E[y_i^2]\E[A^2] + \E[B^2] + \E[C^2] \\ \qquad\qquad
    + 4 \E[y_i^3] \E[A] + 2 \E[y_i^2] \E[B] + 2 \E[y_i^2] \E[C] \\ \qquad\qquad
    + 4 \E[y_i] \E[A B] + 4 \E[y_i] \E[A C]
    + 2 \E[B C]
\bigg) \\
\qquad= \frac{1}{(n-1)^4} \Bigg( \\ \qquad\qquad
    \E[y_i^4] \\ \qquad\qquad
    + 4 \E[y_i^2] \left( (n-2)\E[y_i^2] + (n-2)(n-3)\E[y_i]^2 \right) \\ \qquad\qquad
    + \left( (n-2)\E[y_i^4] + (n-2)(n-3)\E[y_i^2]^2 \right) \\ \qquad\qquad
    + (n-2)(n-3)\bigg(2\Big(\E[y_i^2]^2 + 2 (n-4) \E[y_i^2]\E[y_i]^2 \Big) + (n-4)(n-5)\E[y_i]^4 \bigg) \\ \qquad\qquad
    + 4 \E[y_i^3] (n-2)\E[y_i]  \\ \qquad\qquad
    + 2 \E[y_i^2] (n-2)\E[y_i^2]  \\ \qquad\qquad
    + 2 \E[y_i^2] (n-2)(n-3)\E[y_i]^2 \\ \qquad\qquad
    + 4 \E[y_i] \left( (n-2)\E[y_i^3] + (n-2)(n-3)\E[y_i]\E[y_i^2] \right) \\ \qquad\qquad
    + 4 \E[y_i] (n-2)(n-3)\left( 2 \E[y_i^2]\E[y_i] + (n-4)\E[y_i]^3 \right) \\ \qquad\qquad
    + 2 (n-2)(n-3)\left( 2 \E[y_i^3]\E[y_i] + (n-4)\E[y_i]^2\E[y_i^2] \right)
\Bigg) \\
\qquad= \frac{1}{(n-1)^3} \bigg(
    \E[y_i^4]
    + 3 (n-2) \E[y_i^2]^2
    + 4 (n-2) \E[y_i]\E[y_i^3]  \\ \qquad\qquad
    + 6 (n-2)(n-3) \E[y_i]^2\E[y_i^2]
    + (n-2)(n-3)(n-4) \E[y_i]^4
\bigg) \\
\qquad= \frac{1}{(n-1)^3} \Bigg( \\ \qquad\qquad
    \left(\mu_4 + 4 \mu_3 \mu + 6 \sigma^2 \mu^2 + \mu^4\right) \\ \qquad\qquad
    + 3 (n-2) \left(\mu^4 + 2 \mu^2 \sigma^2 + \sigma^4\right) \\ \qquad\qquad
    + 4 (n-2) \mu \left(\mu_3 + 3 \sigma^2 \mu + \mu^3\right) \\ \qquad\qquad
    + 6 (n-2)(n-3) \mu^2 \left(\mu^2 + \sigma^2\right)  \\ \qquad\qquad
    + (n-2)(n-3)(n-4) \mu^4
\Bigg) \\
\qquad=
    \mu^4
    + \frac{1}{(n-1)^3} \mu_4
    + \frac{4}{(n-1)^2} \mu \mu_3
    + \frac{6}{n-1} \mu^2 \sigma^2
    + \frac{3(n-2)}{(n-1)^3} \sigma^4
\numberthis \,,
\end{gather*}

\begin{align*}
\overline{y}_{-i}^2 \overline{y}_{-j}^2 = \frac{1}{(n-1)^4} \bigg(
&&& + y_i^2 y_j^2 && + y_i^2 2 y_j A && + y_i^2 B && + y_i^2 C && \\
&&& + y_j^2 2 y_i A && + 2 y_i A 2 y_j A && + 2 y_i A B && + 2 y_i A C && \\
&&& + y_j^2 B && + 2 y_j A B && + B^2 && + B C && \\
&&& + y_j^2 C && + 2 y_j A C && + B C && + C^2 && \bigg) \numberthis  \,,
\end{align*}
\begin{gather*}
\E\left[\overline{y}_{-i}^2 \overline{y}_{-j}^2\right] = \frac{1}{(n-1)^4} \bigg(
    \E[y_i^2 y_j^2] + \E[4 y_i y_j A^2] + \E[B^2] + \E[C^2] \\
    \qquad\qquad + 2 \Big( \E[2 y_i y_j^2 A] + \E[y_i^2 B] + \E[y_i^2 C] \\
    \qquad\qquad + \E[2 y_i A B] + \E[2 y_i A C] + \E[B C] \Big) \bigg) \\
\qquad= \frac{1}{(n-1)^4} \bigg(
    \E[y_i^2]^2 + 4 \E[y_i]^2 \E[A^2] + \E[B^2] + \E[C^2] \\
    \qquad\qquad + 2 \Big( 2 \E[y_i] \E[y_i^2] \E[A] + \E[y_i^2] \E[B] + \E[y_i^2] \E[C] \\
    \qquad\qquad + 2 \E[y_i] \E[A B] + 2\E[y_i] \E[A C] + \E[B C] \Big) \bigg) \\
\qquad= \frac{1}{(n-1)^4} \Bigg( \\
    \qquad\qquad \E[y_i^2]^2 + 4 \E[y_i]^2 \left( (n-2)\E[y_i^2] + (n-2)(n-3)\E[y_i]^2 \right) \\
    \qquad\qquad + \left( (n-2)\E[y_i^4] + (n-2)(n-3)\E[y_i^2]^2 \right) \\
    \qquad\qquad + (n-2)(n-3)\bigg(2\Big(\E[y_i^2]^2 + 2 (n-4) \E[y_i^2]\E[y_i]^2 \Big) + (n-4)(n-5)\E[y_i]^4 \bigg) \\
    \qquad\qquad + 2 \bigg( 2 \E[y_i] \E[y_i^2] (n-2)\E[y_i] + \E[y_i^2] (n-2)\E[y_i^2] + \E[y_i^2] (n-2)(n-3)\E[y_i]^2 \\
    \qquad\qquad + 2 \E[y_i] \left( (n-2)\E[y_i^3] + (n-2)(n-3)\E[y_i]\E[y_i^2] \right) \\
    \qquad\qquad+ 2\E[y_i]  (n-2)(n-3)\left( 2 \E[y_i^2]\E[y_i] + (n-4)\E[y_i]^3 \right) \\
    \qquad\qquad+ (n-2)(n-3)\left( 2 \E[y_i^3]\E[y_i] + (n-4)\E[y_i]^2\E[y_i^2] \right) \bigg) \Bigg) \\
\qquad= \frac{1}{(n-1)^4} \Bigg( \\
    \qquad\qquad + \Big( 4(n-2)(n-3) +(n-2)(n-3)(n-4)(n-5) +4(n-2)(n-3)(n-4) \Big) \E[y_i]^4 \\
    \qquad\qquad + \Big(1 + (n-2)(n-3) + 2(n-2)(n-3) + 2(n-2) \Big) \E[y_i^2]^2 \\
    \qquad\qquad + \Big( 4(n-2) + 4(n-2)(n-3)(n-4) + 4(n-2) + 2(n-2)(n-3) \\
    \qquad\qquad \qquad +4(n-2)(n-3) + 8(n-2)(n-3) + 2(n-2)(n-3)(n-4) \Big) \E[y_i]^2\E[y_i^2] \\
    \qquad\qquad + \Big( 4(n-2) + 4(n-2)(n-3) \Big) \E[y_i^3]\E[y_i] \\
    \qquad\qquad + (n-2) \E[y_i^4]
\Bigg) \\
\qquad= \frac{1}{(n-1)^4} \Bigg( \\
    \qquad\qquad + (n-2)(n-3)\left(n^2-5n+8\right) \mu^4 \\
    \qquad\qquad + \left(3n^2 - 13n + 15 \right) \left( \mu^2 + \sigma^2 \right)^2 \\
    \qquad\qquad +  2(n-2)\left(3n^2-14n+19\right) \mu^2\left( \mu^2 + \sigma^2 \right) \\
    \qquad\qquad + 4(n-2)^2 \mu \left( \mu_3 + 3 \sigma^2 \mu + \mu^3 \right) \\
    \qquad\qquad + (n-2) \left( \mu_4 + 4 \mu_3 \mu + 6 \sigma^2 \mu^2 + \mu^4 \right)
\Bigg) \\
\qquad= \frac{1}{(n-1)^4} \Bigg(
    \; (n-1)^4 \mu^4
    \; + 2(n-1)^2(3n-5) \sigma^2 \mu^2
    \; + \left(3n^2 - 13n + 15 \right) \sigma^4 \\
    \qquad\qquad + 4(n-2)(n-1) \mu_3 \mu
    \; + (n-2) \mu_4
\Bigg) \\
\qquad= \mu^4
    + 2\frac{3n-5}{(n-1)^2} \sigma^2 \mu^2
    + \frac{3n^2 - 13n + 15}{(n-1)^4} \sigma^4
    + 4\frac{n-2}{(n-1)^3} \mu_3 \mu
    + \frac{n-2}{(n-1)^4} \mu_4
\numberthis \,,
\end{gather*}

\begin{align*}
y_i \overline{y}_{-i} \; \overline{y}_{-j}^2
    &= y_i \frac{1}{n-1} \left( y_j + A \right) \frac{1}{(n-1)^2} \left(y_i^2 + 2 y_i A + B + C\right) \\
    &= \frac{1}{(n-1)^3} \left( y_i^3 y_j + 2 y_i^2 y_j A + y_i y_j B + y_i y_j C + y_i^3 A + 2 y_i^2 A^2 + y_i A B + y_i A C  \right)
    \numberthis \,,
\end{align*}
\begin{gather*}
\E\left[y_i \overline{y}_{-i} \; \overline{y}_{-j}^2\right]
= \frac{1}{(n-1)^3} \bigg(
    \E[y_i^3] \E[y_i] + 2 \E[y_i^2] \E[y_i] \E[A] + \E[y_i]^2 \E[B] + \E[y_i]^2 \E[C] \\
    \qquad\qquad+ \E[y_i^3] \E[A] + 2 \E[y_i^2] \E[A^2] + \E[y_i] \E[A B] + \E[y_i] \E[A C]
\bigg) \\
\qquad= \frac{1}{(n-1)^3} \bigg(
    \E[y_i^3] \E[y_i] + 2 \E[y_i^2] \E[y_i] (n-2)\E[y_i] + \E[y_i]^2 (n-2)\E[y_i^2] \\
    \qquad\qquad+ \E[y_i]^2 (n-2)(n-3)\E[y_i]^2 + \E[y_i^3] (n-2)\E[y_i] \\
    \qquad\qquad+ 2 \E[y_i^2] \left((n-2)\E[y_i^2]+(n-2)(n-3)\E[y_i]^2\right) \\
    \qquad\qquad+ \E[y_i] \left((n-2)\E[y_i^3]+(n-2)(n-3)\E[y_i]\E[y_i^2]\right) \\
    \qquad\qquad+ \E[y_i] (n-2)(n-3)\left( 2 \E[y_i^2]\E[y_i] + (n-4)\E[y_i]^3 \right)
\bigg) \\
\qquad= \frac{1}{(n-1)^3} \bigg(
    (n-2)(n-3)^2 \E[y_i]^4
    + 2(n-2) \E[y_i^2]^2 \\
    \qquad\qquad
    + (n-2)(5n-12) \E[y_i]^2\E[y_i^2]
    + (2n-3)\E[y_i]\E[y_i^3]
\bigg) \\
\qquad= \frac{1}{(n-1)^3} \bigg(
    (n-2)(n-3)^2 \mu^4
    + 2(n-2) \left(\mu^4 + 2 \mu^2 \sigma^2 + \sigma^4\right) \\
    \qquad\qquad
    + (n-2)(5n-12) \mu^2\left(\mu^2 + \sigma^2\right)
    + (2n-3)\mu\left(\mu_3 + 3 \sigma^2 \mu + \mu^3\right)
\bigg) \\
\qquad= \frac{1}{(n-1)^3} \bigg(
    (n-1)^3 \mu^4
    + (2n-3) \mu \mu_3
    + (n-1)(5n-7) \mu^2 \sigma^2
    +2(n-2) \sigma^4
\bigg) \\
\qquad=
    \mu^4
    + \frac{2n-3}{(n-1)^3} \mu \mu_3
    + \frac{5n-7}{(n-1)^2} \mu^2 \sigma^2
    + \frac{2(n-2)}{(n-1)^3} \sigma^4
\numberthis \,,
\end{gather*}

\begin{gather*}
y_i^2 \overline{y}_{-j}
    = y_i^2 \frac{1}{n-1} \left( y_i + A \right)
    = \frac{1}{n-1} \left( y_i^3 + y_i^2 A \right)
\numberthis \,,
\end{gather*}
\begin{align*}
\E\left[y_i^2 \overline{y}_{-j}\right]
    &= \frac{1}{n-1} \left( \E[y_i^3] + \E[y_i^2] \E[A] \right) \\
&=\frac{1}{n-1} \left(
    \left(\mu_3 + 3 \sigma^2 \mu + \mu^3\right)
    + \left(\mu^2 + \sigma^2\right) (n-2)\mu \right) \\
&= \mu^3 + \frac{1}{n-1} \mu_3 + \frac{n+1}{n-1}\mu\sigma^2
\numberthis \,,
\end{align*}

\begin{align*}
y_i^2 \overline{y}_{-j}^2
    &= y_i^2 \frac{1}{(n-1)^2} \left(y_i^2 + 2 y_i A + B + C\right)
    \\
    &= \frac{1}{(n-1)^2} \left( y_i^4 + 2 y_i^3 A +y_i^2 B + y_i^2 C  \right)
    \numberthis \,,
\end{align*}
\begin{align*}
\E\left[y_i^2 \overline{y}_{-j}^2\right]
    &= \frac{1}{(n-1)^2} \left( \E[y_i^4] + 2 \E[y_i^3] \E[A] + \E[y_i^2] \E[B] + \E[y_i^2] \E[C]  \right) \\
&=\frac{1}{(n-1)^2} \left(
    \E[y_i^4] + 2(n-2) \E[y_i]\E[y_i^3] + (n-2)\E[y_i^2]^2 + (n-2)(n-3)\E[y_i^2]\E[y_i]^2
    \right) \\
&=\frac{1}{(n-1)^2} \bigg(
    \left(\mu_4 + 4 \mu_3 \mu + 6 \sigma^2 \mu^2 + \mu^4\right)
    + 2(n-2) \mu\left(\mu_3 + 3 \sigma^2 \mu + \mu^3\right)\\
    &\qquad\qquad+ (n-2)\left(\mu^4 + 2  \mu^2 \sigma^2 + \sigma^4\right)
    + (n-2)(n-3)\left(\mu^2 + \sigma^2\right)\mu^2
    \bigg) \\
&=
    \mu^4
    + \frac{1}{(n-1)^2}\mu_4
    + \frac{2n}{(n-1)^2} \mu \mu_3
    + \frac{n+4}{n-1} \mu^2 \sigma^2
    + \frac{n-2}{(n-1)^2} \sigma^4
    \numberthis \,.
\end{align*}

%----------------------------
\endgroup % for allowdisplaybreaks locally
%----------------------------
\makeatletter
\setbool{@fleqn}{false}
\makeatother
%----------------------------

\noindent
Consider Bayesian normal model $\Model$ with known variance $\sigma^2_\mathrm{m}$ and prior $\operatorname{N}\left(0, \sigma_0^2\right)$:
\begin{align}
    y \mid \theta, \Model \sim \operatorname{N}(\theta, \sigma^2_\text{m}), \qquad \theta \sim \operatorname{N}(0, \sigma_0^2) \,.
\end{align}
The posterior predictive distribution is \citep[see e.g.][pp.\ 39--42]{bda_book}
\begin{gather}
    \widetilde{y} \mid y, \Model \sim \operatorname{N}\left(\mu_\text{pp}(y) \,, \sigma^2_\text{pp}(n) \right) \,,\\
\intertext{where}
    \mu_\text{pp}(y) = \tau(n) \frac{n}{\sigma^2_\mathrm{m}}\overline{y} \,,\qquad
    \sigma^2_\text{pp}(n) = \sigma^2_\mathrm{m} + \tau(n) \,, \\
    \tau(n) = \left( \frac{1}{\sigma_0^2} + \frac{n}{\sigma^2_\mathrm{m}} \right)^{-1}
    \,.
\end{gather}
The log predictive density is
\begin{align*}
    \log \p(\widetilde{y}|y)
    &= -\left(2\sigma^2_\text{pp}(n)\right)^{-1}
    \left( \widetilde{y} - \tau(n) \frac{n}{\sigma^2_\mathrm{m}}\overline{y} \right)^2
    - \frac{1}{2}\log\left( 2 \pi \sigma^2_\text{pp}(n) \right) \\
    &= a(n) \widetilde{y}^2 + b(n) \widetilde{y}\overline{y} + c(n) \overline{y}^2 + d(n)
    \,,\numberthis
\end{align*}
where
\begingroup
\allowdisplaybreaks
\begin{align*}
    a(n) &= - \frac{1}{2 \sigma^2_\text{pp}(n)} \\
        &= - \frac{1}{2} \frac{ \sigma_\mathrm{m}^2 + n \sigma_0^2}{\sigma_\mathrm{m}^2(\sigma_\mathrm{m}^2 + (n+1) \sigma_0^2)}
        \,, \numberthis \\
    b(n) &= \frac{\tau(n)}{\sigma^2_\mathrm{m} \sigma^2_\text{pp}(n)} n \\
        &= \frac{n \sigma_0^2}{\sigma_\mathrm{m}^2 (\sigma_\mathrm{m}^2 + (n+1) \sigma_0^2)}
        \,, \numberthis \\
    c(n) &= - \frac{\tau(n)^2}{2 \sigma^4_\mathrm{m} \sigma^2_\text{pp}(n)} n^2 \\
        &= - \frac{1}{2} \frac{n^2 \sigma_0^4}{\sigma_\mathrm{m}^2(\sigma_\mathrm{m}^2+n\sigma_0^2)(\sigma_\mathrm{m}^2+(n+1) \sigma_0^2)}
        \,, \numberthis \\
    d(n) &= - \frac{1}{2}\log\left( 2 \pi \sigma^2_\text{pp}(n) \right) \\
        &= - \frac{1}{2} \log\left( 2\pi \frac{\sigma_\mathrm{m}^2(\sigma_\mathrm{m}^2 + (n+1) \sigma_0^2)}{\sigma_\mathrm{m}^2 + n \sigma_0^2} \right)
        \,. \numberthis
\end{align*}
\endgroup
Terms $a(n)$, $b(n)$, $c(n)$, $d(n)$, $\sigma^2_\text{pp}(n)$, and $\tau(n)$ do not depend on the individual observations $y_i$, but only on the number of observations.
The LOO-CV pointwise predictive performance estimate for observation $i$ using elpd utility measure is
\begin{gather}
\elpdHati{\Model}{y}{i} = \log \p(y_i|y_{-i}) = a(n-1) y_i^2 + b(n-1) y_i \overline{y}_{-i} + c(n-1) \overline{y}_{-i}^2 + d(n-1)
\end{gather}
In the following, $a$, $b$, $c$, and, $d$ are notated without the function argument $n-1$ and we use short notation $\ShortElpdHat_i \coloneqq \elpdHati{\Model}{y}{i}$ in order to make the notation more compact.
We have the following expectations:
%----------------------------
% local left align equations
\makeatletter
\setbool{@fleqn}{true}
\makeatother
%----------------------------
% local allow equation break to multiple pages
\begingroup
\allowdisplaybreaks
%----------------------------
\begin{align*}
\E[\ShortElpdHat_i] &= a \E[y_i^2] + b \E[y_i] \E[\overline{y}_{-i}] + c \E[\overline{y}_{-i}^2] + d \\
    &= a (\mu^2 + \sigma^2) + b \mu \mu + c \left( \mu^2 + \frac{1}{n-1} \sigma^2 \right) + d \\
    &= (a + b + c) \mu^2 + \left( a + \frac{1}{n-1}c \right)\sigma^2 + d
    \numberthis \,,\\
\E[\ShortElpdHat_i]^2 &=
    (a + b + c)^2 \mu^4
    + \left(a+\frac{1}{n-1}c\right)^2\sigma^4
    + d^2
    \\
    &\qquad
    + 2 (a + b + c) \left(a+\frac{1}{n-1}c\right) \mu^2 \sigma^2
    + 2 d (a + b + c) \mu^2
    + 2 d \left(a+\frac{1}{n-1}c\right)\sigma^2
    \numberthis \,,
\end{align*}

\begin{align*}
\ShortElpdHat_i^2=&
    && + a^2 y_i^4
    && + b^2 y_i^2 \overline{y}_{-i}^2
    && + c^2 \overline{y}_{-i}^4
    && + d^2 \\
&
    && + 2 a b y_i^3 \overline{y}_{-i}
    && + 2 a c y_i^2 \overline{y}_{-i}^2
    && + 2 a d y_i^2
    &&  \\
&
    && + 2 b c y_i \overline{y}_{-i}^3
    && + 2 b d y_i \overline{y}_{-i}
    &&
    && \\
&
    && + 2 c d \overline{y}_{-i}^2 \,,
    &&
    &&
    &&
    \numberthis
\end{align*}
\begin{align*}
\E[\ShortElpdHat_i^2]=&
    && + a^2 \E[y_i^4]
    && + b^2 \E[y_i^2] \E[\overline{y}_{-i}^2]
    && + c^2 \E[\overline{y}_{-i}^4]
    && + d^2 \\
&
    && + 2 a b \E[y_i^3]\E[\overline{y}_{-i}]
    && + 2 a c \E[y_i^2]\E[\overline{y}_{-i}^2]
    && + 2 a d \E[y_i^2]
    &&  \\
&
    && + 2 b c \E[y_i]\E[\overline{y}_{-i}^3]
    && + 2 b d \E[y_i]\E[\overline{y}_{-i}]
    &&
    && \\
&
    && + 2 c d \E[\overline{y}_{-i}^2]
    &&
    &&
    &&
\end{align*}
\begin{gather*}
\qquad =
    a^2 \left(\mu_4 + 4 \mu_3 \mu + 6 \sigma^2 \mu^2 + \mu^4\right) \\ \qquad\qquad
    + b^2 \left(\mu^2 + \sigma^2\right) \left(\mu^2 + \frac{1}{n-1} \sigma^2\right) \\ \qquad\qquad
    + c^2 \left(
        \mu^4
        + \frac{1}{(n-1)^3} \mu_4
        + \frac{4}{(n-1)^2} \mu \mu_3
        + \frac{6}{n-1} \mu^2 \sigma^2
        + \frac{3(n-2)}{(n-1)^3} \sigma^4\right) \\ \qquad\qquad
    + d^2 \\ \qquad\qquad
    + 2 a b \left(\mu_3 + 3 \sigma^2 \mu + \mu^3\right) \mu \\ \qquad\qquad
    + 2 a c \left(\mu^2 + \sigma^2\right) \left(\mu^2 + \frac{1}{n-1} \sigma^2\right) \\ \qquad\qquad
    + 2 a d \left(\mu^2 + \sigma^2\right) \\ \qquad\qquad
    + 2 b c \mu \left(\mu^3 + \frac{1}{(n-1)^2} \mu_3 + \frac{3}{n-1} \mu \sigma^2\right) \\ \qquad\qquad
    + 2 b d \mu \mu \\ \qquad\qquad
    + 2 c d \left(\mu^2 + \frac{1}{n-1} \sigma^2\right) \\
\qquad=
    (a+b+c)^2 \mu^4 \\ \qquad\qquad
    + 2 d \left(a + b + c\right) \mu^2 \\ \qquad\qquad
    + \left(
        6 a^2
        + \frac{n}{n-1} b^2
        + \frac{6}{n-1} c^2
        + 6 a b
        + \frac{2n}{n-1} a c
        + \frac{6}{n-1} b c
        \right) \mu^2 \sigma^2 \\ \qquad\qquad
    + \left(
        \frac{1}{n-1} b^2
        + \frac{3(n-2)}{(n-1)^3} c^2
        + \frac{2}{n-1} a c
        \right) \sigma^4 \\ \qquad\qquad
    + 2 d \left( a + \frac{1}{n-1} c \right) \sigma^2 \\ \qquad\qquad
    + \left(
        4 a^2
        + 2 a b
        + \frac{2}{(n-1)^2} b c
        + \frac{4}{(n-1)^2} c^2
        \right) \mu \mu_3 \\ \qquad\qquad
    + \left(
        a^2
        + \frac{1}{(n-1)^3}c^2
        \right) \mu_4 \\ \qquad\qquad
    + d^2
    \numberthis \,,
\end{gather*}

\begin{align*}
\ShortElpdHat_i \ShortElpdHat_j =
&&& + a^2 y_i^2 y_j^2 && + ab y_i^2y_j\overline{y}_{-j} && + a c y_i^2 \overline{y}_{-j}^2 && + ady_i^2 \\
&&& + a b y_i \overline{y}_{-i} y_j^2 && + b^2 y_i \overline{y}_{-i}y_j \overline{y}_{-j}
    && + bc y_i \overline{y}_{-i} \, \overline{y}_{-j}^2  && + bd y_i \overline{y}_{-i} \\
&&& + ac \overline{y}_{-i}^2 y_j^2 && + bc \overline{y}_{-i}^2 y_j \overline{y}_{-j}
    && + c^2 \overline{y}_{-i}^2\overline{y}_{-j}^2 && + cd \overline{y}_{-i}^2 \\
&&& + ad y_j^2 && + bd y_j \overline{y}_{-j} && + cd \overline{y}_{-j}^2 && + d^2
\numberthis\,,
\end{align*}
\begin{align*}
\E\left[\ShortElpdHat_i \ShortElpdHat_j\right]=&
    && + a^2 \E[y_i^2 y_j^2]
    && + b^2 \E[y_i y_j \overline{y}_{-i} \; \overline{y}_{-j}]
    && + c^2 \E[\overline{y}_{-i}^2\overline{y}_{-j}^2]
    && + d^2 \\
&
    && + 2 ab \E[y_i^2y_j\overline{y}_{-j}]
    && + 2 a c \E[y_i^2 \overline{y}_{-j}^2]
    && + 2 ad\E[y_i^2]
    &&  \\
&
    && + 2 bc \E[y_i \overline{y}_{-i} \, \overline{y}_{-j}^2]
    && + 2 bd \E[y_i \overline{y}_{-i}]
    &&
    && \\
&
    && + 2 cd \E[\overline{y}_{-i}^2]
    &&
    &&
    && \\
\qquad=&
    && + a^2 \E[y_i^2]^2
    && + b^2 \E[y_i y_j \overline{y}_{-i} \; \overline{y}_{-j}]
    && + c^2 \E[\overline{y}_{-i}^2\overline{y}_{-j}^2]
    && + d^2 \\
&
    && + 2 ab \E[y_i]\E[y_i^2\overline{y}_{-j}]
    && + 2 ac \E[y_i^2 \overline{y}_{-j}^2]
    && + 2 ad\E[y_i^2]
    &&  \\
&
    && + 2 bc \E[y_i \overline{y}_{-i} \, \overline{y}_{-j}^2]
    && + 2 bd \E[y_i]\E[\overline{y}_{-i}]
    &&
    && \\
&
    && + 2 cd \E[\overline{y}_{-i}^2] \,,
    &&
    &&
    &&
    \numberthis
\end{align*}
\begin{gather*}
\qquad=
    a^2 \left( \mu^4 + 2 \mu^2 \sigma^2 + \sigma^4 \right) \\
\qquad\qquad+
    b^2 \left(\mu^4
        + \frac{3n-4}{(n-1)^2} \mu^2 \sigma^2
        + \frac{1}{(n-1)^2} \sigma^4
    \right)\\
\qquad\qquad+
    c^2 \left(\mu^4
        + 2\frac{3n-5}{(n-1)^2} \sigma^2 \mu^2
        + \frac{3n^2 - 13n + 15}{(n-1)^4} \sigma^4
        + 4\frac{n-2}{(n-1)^3} \mu_3 \mu
        + \frac{n-2}{(n-1)^4} \mu_4
    \right)\\
\qquad\qquad+
    d^2 \\
\qquad\qquad+
    2 ab \mu \left( \mu^3 + \frac{1}{n-1} \mu_3 + \frac{n+1}{n-1}\mu\sigma^2 \right) \\
\qquad\qquad+
    2 ac \left( \mu^4
    + \frac{1}{(n-1)^2}\mu_4
    + \frac{2n}{(n-1)^2} \mu \mu_3
    + \frac{n+4}{n-1} \mu^2 \sigma^2
    + \frac{n-2}{(n-1)^2} \sigma^4 \right) \\
\qquad\qquad+
    2 ad \left(\mu^2 + \sigma^2 \right) \\
\qquad\qquad+
    2 bc \left( \mu^4
    + \frac{2n-3}{(n-1)^3} \mu \mu_3
    + \frac{5n-7}{(n-1)^2} \mu^2 \sigma^2
    + \frac{2(n-2)}{(n-1)^3} \sigma^4 \right) \\
\qquad\qquad+
    2 bd \mu \mu \\
\qquad\qquad+
    2 cd \left( \mu^2 + \frac{1}{n-1} \sigma^2 \right) \\
%---
\qquad=
    \left( a + b + c \right)^2 \mu^4 \\
\qquad\qquad+
    2 d \left( a + b + c \right) \mu^2 \\
\qquad\qquad+
    \Bigg(
    2 a^2
    + \frac{3n-4}{(n-1)^2} b^2
    + \frac{2(3n-5)}{(n-1)^2}c^2 \\ \qquad\qquad\qquad
    + \frac{2(n+1)}{n-1} a b
    + \frac{2(n+4)}{n-1} a c
    + \frac{2(5n-7)}{(n-1)^2} b c
    \Bigg) \mu^2 \sigma^2 \\
\qquad\qquad+
    \left(
    a^2
    + \frac{1}{(n-1)^2} b^2
    + \frac{3 n^2 - 13 n +15}{(n-1)^4} c^2
    + \frac{2(n-2)}{(n-1)^2} a c
    + \frac{4(n-2)}{(n-1)^3} b c
    \right) \sigma^4 \\
\qquad\qquad+
    2 d \left(a + \frac{1}{n-1} c \right) \sigma^2 \\
\qquad\qquad+
    \left(
    \frac{4(n-2)}{(n-1)^3} c^2
    + \frac{2}{n-1} a b
    + \frac{4n}{(n-1)^2} a c
    + \frac{4n-6}{(n-1)^3} b c
    \right) \mu \mu_3 \\
\qquad\qquad+
    \left(
    \frac{n-2}{(n-1)^4} c^2
    + \frac{2}{(n-1)^2} a c
    \right) \mu_4 \\
\qquad\qquad+
    d^2
    \numberthis \,.
\end{gather*}
%
%----------------------------
\endgroup % for allowdisplaybreaks locally
%----------------------------
\makeatletter
\setbool{@fleqn}{false}
\makeatother
%----------------------------
%
With these expectations, we can derive the variance and covariance of the pointwise LOO-CV estimates:
\begin{align*}
\Var(\ShortElpdHat_i) &= \E[\ShortElpdHat_i^2] - \E[\ShortElpdHat_i]^2 \\
&=
    \left(4a^2 + \frac{n}{n-1} b^2 + \frac{4}{n-1} c^2 + 4 a b + \frac{4}{n-1} b c\right) \mu^2 \sigma^2
    \\ &\qquad\qquad
    + \left(-a^2 + \frac{1}{n-1} b^2 + \frac{2n-5}{(n-1)^3} c^2 \right) \sigma^4
    \\ &\qquad\qquad
    + \left(4 a^2 + \frac{4}{(n-1)^2} c^2 + 2 a b + \frac{2}{(n-1)^2} b c \right) \mu \mu_3
    \\ &\qquad\qquad
    + \left(a^2 + \frac{1}{(n-1)^3} c^2\right) \mu_4
    \,,\numberthis \label{eq_app_var_looi}
\end{align*}
\begin{align*}
\Cov(\ShortElpdHat_i, \ShortElpdHat_j) &= \E[\ShortElpdHat_i \ShortElpdHat_j] - \E[\ShortElpdHat_i] \E[\ShortElpdHat_j]
    \\
    &= \E[\ShortElpdHat_i \ShortElpdHat_j] - \E[\ShortElpdHat_i]^2 \\
& =
    \left(
    \frac{3n-4}{(n-1)^2} b^2
    + \frac{4(n-2)}{(n-1)^2} c^2
    + \frac{4}{n-1} a b
    + \frac{8}{n-1} a c
    + \frac{4(2n-3)}{(n-1)^2} b c
    \right) \mu^2 \sigma^2 \\
&\qquad\qquad+
    \left(
    \frac{1}{(n-1)^2} b^2
    + \frac{(n-2)(2n-7)}{(n-1)^4} c^2
    - \frac{2}{(n-1)^2} a c
    + \frac{4(n-2)}{(n-1)^3} b c
    \right) \sigma^4 \\
&\qquad\qquad+
    \left(
    \frac{4(n-2)}{(n-1)^3} c^2
    + \frac{2}{n-1} a b
    + \frac{4n}{(n-1)^2} a c
    + \frac{4n-6}{(n-1)^3} b c
    \right) \mu \mu_3 \\
&\qquad\qquad+
    \left(
    \frac{n-2}{(n-1)^4} c^2
    + \frac{2}{(n-1)^2} a c
    \right) \mu_4
    \,.\numberthis\label{eq_app_cov_looi_looj}
\end{align*}
The variance of the sum of the pointwise LOO-CV terms $\elpdHat{\Model}{y} = \sum_{p=1}^n \ShortElpdHat_p$ is \citep{bengio_Grandvalet_2004}
\begin{align*}
    \Var\left(\elpdHat{\Model}{y}\right)
    &= n \Var(\ShortElpdHat_i) + n(n-1) \Cov(\ShortElpdHat_i, \ShortElpdHat_j)
\,.\numberthis\label{eq_app_var_elpd_bengio_granv}
\end{align*}
Combining from equations~\eqref{eq_app_var_looi}, \eqref{eq_app_cov_looi_looj}, and~\eqref{eq_app_var_elpd_bengio_granv}, we get the desired result:
\begin{align*}
    &\Var\left(\elpdHat{\Model}{y}\right)
        \\
&\qquad= 4 n (a + b + c)^2 \mu^2 \sigma^2 \\
&\qquad\qquad+
    \left(
        - n a^2
        + \frac{2n}{n-1} b^2
        + \frac{n (2 n - 3) (n - 3)}{(n - 1)^3} c^2
        - \frac{2n}{n-1} a c
        + \frac{4n(n-2)}{(n-1)^2} b c
    \right) \sigma^4 \\
&\qquad\qquad+
    \frac{4 n (a + b + c) (a (n - 1) + c)}{n - 1} \mu \mu_3 \\
&\qquad\qquad+
    \left(
        n a^2
        + \frac{n}{(n-1)^2} c^2
        + \frac{2n}{n-1} a c
    \right) \mu_4
    \,.\numberthis\label{eq_app_var_elpd}
\end{align*}
\end{proof}

% -----------------------------------------
% -----------------------------------------

%\section{Unbiased sample based estimates for some products of moments}\label{sec_app_samp_estims_for_terms}

\subsection{Proof of Lemma~\ref{lemma_estims_for_some_moment_products}}
\label{sec_app_samp_estims_for_terms}

In this section, we give a proof for Lemma~\ref{lemma_estims_for_some_moment_products} by showing that it is possible to construct sample based estimates for terms $\mu^2\sigma^2$, $\sigma^4$, $\mu \mu_3$, and $\mu_4$ given the data set $y$ of $n$ independent observations, where
\begin{align*}
    \E[y_i] &= \mu \,, \\
    \Var(y_i) &= \sigma^2 \,, \\
    \E[(y_i - \E[y_i])^r] &= \mu_r \quad \text{($r$th central moment)}
    \numberthis\,,
\end{align*}
and $n \geq 4$.
Let us first restate the lemma.
\begin{quote}
    Let the data generating mechanism for $y=[y_1,y_2,\dots,y_n]$ be such that Equation~\eqref{eq_assumptions} holds.
    Let
    \begin{equation}
    \widehat{\alpha}_k = \frac{1}{n}\sum_{i=1}^n y_i^k
    \end{equation}
    be the $k$th sample raw moment of the data and
    \begin{align*}
        \widehat{\mu^4} = \binom{n}{4}^{-1} \sum_{i_1 \neq i_2 \neq i_3  \neq i_4} y_{i_1}y_{i_2}y_{i_3}y_{i_4}
       \,, \numberthis
    \end{align*}
    where the summation is over all possible combinations of $i_1, i_2, i_3, i_4 \in \{1,2,\dots,n\}$ without repetition,
    be an unbiased estimator for the fourth power of the mean.
    Now
    \begingroup
    \allowdisplaybreaks
    \begin{align*}
        \widehat{\mu^2 \sigma^2} &= \frac{- n^3 \widehat{\alpha}_1^4 + 2 n^3 \widehat{\alpha}_2 \widehat{\alpha}_1^2 - 4 (n - 1) n \widehat{\alpha}_3 \widehat{\alpha}_1 - (2 n^2 - 3n) \widehat{\alpha}_2^2 + 2 (2 n - 3) \widehat{\alpha}_4}{2 (n - 3) (n - 2) (n - 1)} - \frac{1}{2} \widehat{\mu^4}
        \,. \numberthis
        %\label{eq_estim_ms}
    \\
        \widehat{\sigma^4} &= \frac{
            n^3 \widehat{\alpha}_1^4 - 2 n^3 \widehat{\alpha}_2 \widehat{\alpha}_1^2 + (n^3 - 3 n^2 + 3n) \widehat{\alpha}_2^2 + 4 n (n - 1) \widehat{\alpha}_3 \widehat{\alpha}_1 + n (1 - n) \widehat{\alpha}_4
        }{
            (n - 3) (n - 2) (n - 1)
        }
        \,, \numberthis %\label{eq_estim_s4}
    \\
        \widehat{\mu \mu_3} &= \frac{
            - 2 (n^2 + n -3) \widehat{\alpha}_4
            - 6 n^3 \widehat{\alpha}_1^2 \widehat{\alpha}_2
            + n (6 n -9) \widehat{\alpha}_2^2
            + 3 n^3 \widehat{\alpha}_1^4
            + 2 n^2 (n + 1) \widehat{\alpha}_1 \widehat{\alpha}_3
        }{
            2 (n-3) (n-2) (n-1)
        }
        + \frac{1}{2} \widehat{\mu^4}
        \,, \numberthis
        %\label{eq_estim_mm}
    \intertext{and}
        \widehat{\mu_4} &= \frac{
            -3n^4 \widehat{\alpha}_1^4
            +6n^4 \widehat{\alpha}_1^2\widehat{\alpha}_2
            +(9-6n)n^2 \widehat{\alpha}_2^2
            +(-12+8n-4n^2)n^2 \widehat{\alpha}_1\widehat{\alpha}_3
            +(3n-2n^2+n^3)n \widehat{\alpha}_4
        }{
            (n-3)(n-2)(n-1)n
        }
         \numberthis %\label{eq_estim_m4}
    \end{align*}
    \endgroup
    are unbiased estimators for the parameters $\mu^2 \sigma^2$, $\sigma^4$, $\mu \mu_3$, and $\mu_4$ respectively.
\end{quote}
\begin{proof}
\citet{unbiased_mean_powers} present an unbiased estimator for $\mu^4$:
\begin{align*}
    \widehat{\mu^4} = \frac{(n-4)!}{n!} \sum_{i_1 \neq i_2 \neq i_3  \neq i_4} y_{i_1}y_{i_2}y_{i_3}y_{i_4}
    \numberthis\,,
\end{align*}
where the summation is over all possible permutations of all possible sets of $i_1, i_2, i_3, i_4 \in \{1,2,\dots,n\}$ without repetition.
For efficiency, we suggest using the following form of this estimator:
\begin{align*}
    \widehat{\mu^4} = \binom{n}{4}^{-1} \sum_{i_1 \neq i_2 \neq i_3  \neq i_4} y_{i_1}y_{i_2}y_{i_3}y_{i_4}
   \,, \numberthis
\end{align*}
where the summation is over all possible combinations of $i_1, i_2, i_3, i_4 \in \{1,2,\dots,n\}$ without repetition.
\citet{espejo2013optimal} directly present an unbiased estimator for the fourth central moment $\mu_4$:
\begin{align}
    \widehat{\mu_4} &= \frac{
        -3n^4 \widehat{\alpha}_1^4
        +6n^4 \widehat{\alpha}_1^2\widehat{\alpha}_2
        +(9-6n)n^2 \widehat{\alpha}_2^2
        +(-12+8n-4n^2)n^2 \widehat{\alpha}_1\widehat{\alpha}_3
        +(3n-2n^2+n^3)n \widehat{\alpha}_4
    }{
        (n-3)(n-2)(n-1)n
    } \,,
     \label{eq_app_estim_m4}
\end{align}
\begin{align}
    \E\left[\widehat{\mu_4}\right] &= \mu_4 \,,
\intertext{where}
\widehat{\alpha}_k &= \frac{1}{n}\sum_{i=1}^n y_i^k
\end{align}
is the $k$th sample raw moment of the data.
In addition, as an auxiliary result, they present an estimator with an expectation of $\mu_4 + 3\sigma^4$ in equations~13 and~15:
\begin{align}
    t &= \frac{n}{n-1} \left( \widehat{\alpha}_4 - 4 \widehat{\alpha}_3 \widehat{\alpha}_1 + 3 \widehat{\alpha}_2^2 \right)
    \,, \\
    \E\left[t\right] &= \mu_4 + 3\sigma^4 \,.
\end{align}
It is possible to construct an unbiased estimator for $\sigma^4$ as a linear combination from these two estimators:
\begin{align*}
    \widehat{\sigma^4} &= \frac{1}{3} t - \frac{1}{3} \widehat{\mu_4}
    \\
    &= \frac{
        n^3 \widehat{\alpha}_1^4 - 2 n^3 \widehat{\alpha}_2 \widehat{\alpha}_1^2 + (n^3 - 3 n^2 + 3n) \widehat{\alpha}_2^2 + 4 n (n - 1) \widehat{\alpha}_3 \widehat{\alpha}_1 + n (1 - n) \widehat{\alpha}_4
    }{
        (n - 3) (n - 2) (n - 1)
    }
    \,, \numberthis \label{eq_app_estim_s4}
    \\
    \E\left[\widehat{\sigma^4}\right] &= \frac{1}{3} \left( \mu_4 + 3\sigma^4 \right) - \frac{1}{3} \mu_4 = \sigma^4
    \,.  \numberthis
\end{align*}
It is known that
\begin{align*}
\E[y_i^4] &= \mu_4 + 4 \mu_3 \mu + 6 \sigma^2 \mu^2 + \mu^4 \,,  \numberthis \\
\E[y_i^2]^2 &= \left( \mu^2 + \sigma^2 \right)^2 = \mu^4 + 2 \sigma^2 \mu^2 + \sigma^4
    \numberthis\,.
\end{align*}
We have
\begin{align*}
\E[\widehat{\alpha}_4]
    &= \frac{1}{n}\sum_{i=1}^n \E[y_i^4]
    \\
    &= \E[y_i^4]
    \\
    &= \mu_4 + 4 \mu_3 \mu + 6 \sigma^2 \mu^2 + \mu^4
    \,,\numberthis \label{eq_app_mu4_mu3mu_sigma2mu2_mu4_estim}
\\
\E[\widehat{\alpha}_2^2]
    &= \E\left[\frac{1}{n^2}\sum_{i} y_i^2 \sum_{j} y_j^2\right]
    = \frac{1}{n^2} \E\left[\sum_{i} y_i^4 + \sum_{i} \sum_{j\neq \{i\}} y_i^2 y_j^2\right]\\
    &= \frac{1}{n} \E[y_i^4] + \frac{n-1}{n} \E[y_i^2]^2
    \,,\numberthis
\\
\E\left[\frac{n}{n-1} \widehat{\alpha}_2^2 - \frac{1}{n-1} \widehat{\alpha}_4\right]
    &= \E[y_i^2]^2
    \\
    &= \mu^4 + 2 \sigma^2 \mu^2 + \sigma^4
    \,.\numberthis\label{eq_app_mu4_sigma2mu2_sigma4_estim}
\end{align*}
By linearly combining an unbiased estimator for $\mu^4$ and for $\sigma^4$ to the estimator presented in Equation~\eqref{eq_app_mu4_sigma2mu2_sigma4_estim}, it is possible to construct an unbiased estimate for $\sigma^2 \mu^2$:
\begin{align*}
    \widehat{\sigma^2 \mu^2} &= \frac{1}{2(n-1)} \left( n \widehat{\alpha}_2^2 - \widehat{\alpha}_4 \right) - \frac{1}{2} \widehat{\mu^4} - \frac{1}{2} \widehat{\sigma^4}
    \\
     &= \frac{- n^3 \widehat{\alpha}_1^4 + 2 n^3 \widehat{\alpha}_2 \widehat{\alpha}_1^2 - 4 (n - 1) n \widehat{\alpha}_3 \widehat{\alpha}_1 - (2 n^2 - 3n) \widehat{\alpha}_2^2 + 2 (2 n - 3) \widehat{\alpha}_4}{2 (n - 3) (n - 2) (n - 1)} - \frac{1}{2} \widehat{\mu^4}
    \,, \numberthis
    \label{eq_app_estim_sm}
    \\
     \E\left[\widehat{\sigma^2 \mu^2}\right] &= \frac{1}{2} \left(\mu^4 + 2 \sigma^2 \mu^2 + \sigma^4\right) - \frac{1}{2} \mu^4 - \frac{1}{2} \sigma^4  = \sigma^2 \mu^2
    \,. \numberthis
\end{align*}
Further, by combining the unbiased estimators for $\mu_4$, $\mu^4$, and $\sigma^2 \mu^2$ to the estimator presented in Equation~\eqref{eq_app_mu4_mu3mu_sigma2mu2_mu4_estim}, it is possible to construct an unbiased estimator for $\mu_3 \mu$:
\begin{align*}
    \widehat{\mu_3 \mu} &= \frac{1}{4} \widehat{\alpha}_4 - \frac{1}{4} \widehat{\mu^4} - \frac{1}{4} \widehat{\mu_4} - \frac{6}{4} \widehat{\sigma^2 \mu^2}
    \\
     &= \frac{
        - 2 (n^2 + n -3) \widehat{\alpha}_4
        - 6 n^3 \widehat{\alpha}_1^2 \widehat{\alpha}_2
        + n (6 n -9) \widehat{\alpha}_2^2
        + 3 n^3 \widehat{\alpha}_1^4
        + 2 n^2 (n + 1) \widehat{\alpha}_1 \widehat{\alpha}_3
    }{
        2 (n-3) (n-2) (n-1)
    }
    + \frac{1}{2} \widehat{\mu^4}
    \,, \numberthis
    \label{eq_app_estim_mm}
    \\
    \E\left[\widehat{\mu_3 \mu}\right] &= \frac{1}{4} \left(\mu_4 + 4 \mu_3 \mu + 6 \sigma^2 \mu^2 + \mu^4\right) - \frac{1}{4} \mu^4 - \frac{1}{4} \mu_4 - \frac{6}{4} \sigma^2 \mu^2  = \mu_3 \mu
    \,. \numberthis
\end{align*}
We have show that it is possible to construct an unbiased estimators for all the desired products of moments.
The presented estimators serve as an example and other possibly more optimal estimators might be constructed.
\end{proof}

% \end{appendices}

\end{document}